\crefname{plemma}{lemma}{lemmas}
\title{Pacing Types: Safe Monitoring of Asynchronous Streams}
\author{Florian Kohn}
\email{florian.kohn@cispa.de}
\affiliation{
  \institution{CISPA Helmholtz Center for Information Security}
  \country{Germany}
}
\author{Arthur Correnson}
\email{arthur.correnson@cispa.de}
\affiliation{
  \institution{CISPA Helmholtz Center for Information Security}
  \country{Germany}
}
\author{Jan Baumeister}
\email{jan.baumeister@cispa.de}
\affiliation{
  \institution{CISPA Helmholtz Center for Information Security}
  \country{Germany}
}
\author{Bernd Finkbeiner}
\email{finkbeiner@cispa.de}
\affiliation{
  \institution{CISPA Helmholtz Center for Information Security}
  \country{Germany}
}
\newcommand{\CLola}{\textsc{StreamCore}\xspace}
\newcommand{\stream}[1]{{\footnotesize\texttt{#1}}\xspace}
\newcommand{\bad}{\lightning}
\newcommand{\headline}[1]{\subsubsection*{\textbf{#1}}}
\definecolor{bluekeywords}{rgb}{0.13, 0.13, 1}
\definecolor{greentypes}{rgb}{0, 0.5, 0}
\definecolor{orangecomments}{rgb}{1, 0.5, 0.1}
\definecolor{redstrings}{RGB}{171, 114, 2}
\definecolor{graynumbers}{rgb}{0.5, 0.5, 0.5}
\definecolor{goldcomments}{rgb}{0.6, 0.4, 0.08}
\lstdefinelanguage{Lola}{
  keywords=[0]{input, output, trigger, constant, import, spawn, eval, close, with, when},
  moredelim=**[is][\color{greentypes}@]{@}{@},
  keywordstyle=[0]\bfseries\color{bluekeywords},
  keywords=[1]{if, then, else, aggregate, defaults, offset, last, prev, by, or, to, sin, cos, abs, hold, over, using, over_instances},
  keywords=[2]{Variable, String, Int, Int64, UInt, UInt64, Bool, Float32, Float64, Float},
  keywordstyle=[2]\color{greentypes},
  sensitive=false,
  comment=[l]{//},
  morecomment=[s]{/*}{*/},
  morestring=[b]',
  morestring=[b]",
  literate={\\@}{@}1
}
\begin{abstract}
  Stream-based monitoring is a real-time safety assurance mechanism for complex cyber-physical systems such as unmanned aerial vehicles.
  In this context, a \emph{monitor} aggregates streams of input data from sensors and other sources to give real-time statistics and assessments of the system's health.
  Since monitors are safety-critical components, it is crucial to ensure that they are free of potential runtime errors.
  One of the central challenges in designing reliable stream-based monitors is to deal with the asynchronous nature of data streams:
  in concrete applications, the different sensors being monitored produce values at different speeds, and it is the monitor's responsibility to correctly react to the asynchronous arrival of different streams of values.
  To ease this process, modern frameworks for stream-based monitoring such as \textsc{RTLola} feature an expressive specification language that allows to finely specify data synchronization policies.
  While this feature dramatically simplifies the design of
  monitors, it can also lead to subtle runtime errors.
  To mitigate this issue, this paper presents \emph{pacing types}, a novel type system implemented in RTLola to ensure that monitors for asynchronous streams are well-behaved at runtime.
  We formalize the essence of pacing types for a core fragment of \textsc{RTLola}, and present a soundness proof of the pacing type system using a new logical relation.
\end{abstract}
\begin{document}
  \maketitle
  \section{Introduction}

Runtime verification is an approach to reliable computing in which systems are monitored during their execution. 
Runtime verification is particularly useful for increasing the reliability of complex cyber-physical systems (CPS) for which obtaining a model for static verification is difficult or impossible.

Stream-based monitoring is an approach to runtime verification in which a monitor
collects data through input streams, which are then combined and aggregated in output streams to give real-time assessments of a system's health.
\Cref{fig:stream-based-monitoring} illustrates how a stream-based monitor integrates with a CPS.
The light grey dots on the left represent input stream values measured by system sensors.
The dark grey dots indicate computed output stream values, which are used to determine a boolean verdict.

\begin{figure}
    \centering
        \includegraphics[width=0.75\linewidth]{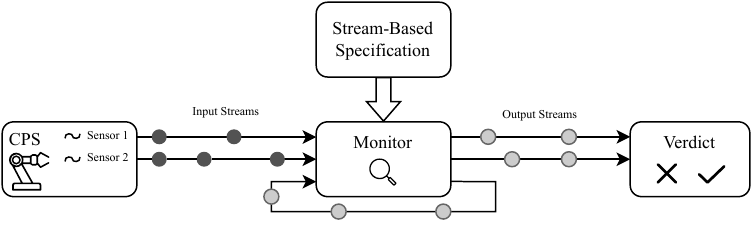}
        \caption{An overview of Stream-based Monitoring}
        \label{fig:stream-based-monitoring}
\end{figure}

There exist many formalisms and programming languages to describe stream-based monitors \cite{d2005lola, DBLP:conf/cav/FaymonvilleFSSS19, volostream, DBLP:conf/rv/KallwiesLSSTW22,DBLP:conf/fm/GorostiagaS21,DBLP:conf/fm/PerezGD24}.
\textsc{RTLola}, for example, is a stream-based runtime monitoring framework based on the
formalism of stream equations. In \textsc{RTLola}, a monitor is specified by listing
equations relating input and output streams. Certain boolean output streams, called triggers,
indicate whether the system being monitored is behaving as expected.

From a system perspective, stream-based monitors are passive components.
They react to inputs from the system as they arrive.
If different system sensors generate inputs at varying rates, the monitor must handle the asynchronous arrival of data.
To address this issue, \textsc{RTLola} features a system of \emph{pacing annotations} where
each output stream is labeled with additional information indicating when a stream must
produce a fresh value in reaction to the asynchronous arrival of new input values.
While the pacing annotation system of \textsc{RTLola} has proven to be a
useful in practical applications \cite{Lola2.0,baumeister2020rtlola,volostream}, it also introduces an critical challenge: ensuring the \emph{consistency} of pacing annotations.
Indeed, it is easy to write annotations that are incompatible with one another
or that over-constrain a stream by requiring it to produce a value at a time when its dependencies are not guaranteed to be available.
In this paper, we propose a type-based approach to solve this problem.
Concretely, we view pacing annotations as types, and we develop a type system
to check the consistency of pacing annotations.

\paragraph{Contributions}

In summary, we make the following contributions:
\begin{enumerate}
    \item We present \CLola, a core language for stream-based monitoring that models the essential features of \textsc{RTLola}, and in particular, its system of pacing annotations.
    \item We equip \CLola with a formal semantics, interpreting specifications as relations between sets of input and output streams.
    \item We present a type system for \CLola that checks the consistency 
    of pacing annotations.
    \item Using the method of logical relations, we prove that the type system is \textit{sound}, ensuring that well-typed specifications define a total input-output relation.
    This guarantees that well-typed specifications can be compiled into safe executable programs.
\end{enumerate}

\paragraph{Structure of the paper}

\Cref{sec:overview} provides an overview of \textsc{RTLola} and its system of pacing annotations. It also illustrates the inconsistency problem and its type-based resolution.
\Cref{sec:clola} formalizes the syntax and the semantics of \CLola, a core fragment of \textsc{RTLola}.
\Cref{sec:pacing_types} presents a first simplified version of the pacing type system and its proof of soundness. This first version of the type systems prohibits output streams from accessing their past values.
We lift this restriction in \Cref{sec:pacing_types_self} before presenting related work in \Cref{sec:related} and concluding the paper in \Cref{sec:conclusion}.

%

%
  

  \section{Overview}\label{sec:overview}
This section provides a high-level overview of \textsc{RTLola} and its pacing annotations using a battery management system as an example.
In \textsc{RTLola}, monitors are specified by declaring a list of named input and output streams.
As an example, consider the following \textsc{RTLola} specification which 
monitors excessive power draw:
  \begin{lstlisting}[label=ex:starting, caption={A specification for monitoring excessive power draw.}]
input  battery_level
output drain   := battery_level.prev(or: battery_level) - battery_level
output warning := drain > 5
  \end{lstlisting}
The input stream \stream{battery\_level} captures the current battery level measurement.
The output stream \stream{drain} computes the charge loss by subtracting the current battery level from the previous one, accessed via \lstinline|battery_level.prev(...)|.
The argument to \lstinline|prev| provides a default value for the first evaluation step.
This example illustrates the core concepts of input and output streams and the use of stream offsets to express temporal properties.

\headline{Handling Multiple Asynchronous Inputs}

In practice, a monitor often requires access to multiple sensors to produce meaningful verdicts about the system's health.
For example, the previous specification can be refined to monitor that the battery does not overheat while it is charging.
For this, an additional input stream \stream{temperature} captures the current temperature of the battery being monitored:
\begin{lstlisting}[label=ex:async_spec, caption={An asynchronous Specification.}]
  input battery_level: Int
  input temperature: Int

  output drain @battery_level@ := battery_level.prev(or: battery_level) - battery_level
  output temp_warning @battery_level & temperature@ := drain < 0 && temperature > 50
\end{lstlisting}

Importantly, \textsc{RTLola} does not make the assumption that streams are synchronized: different inputs might arrive at different rates.
This raises the question of how to interpret an expression like \lstinline{drain < 0 && temperature > 50} in the previous specification.
More precisely, it is not clear what the value of this expression should be if one of the two inputs is not available at the time of evaluation.
For example, the monitor could take the last defined value of the stream or wait for all values to be available.
However, both solutions would negatively impact the precision of the monitor.
Implicitly taking the last defined values would allow to mix
measurements performed at very different time points and whose comparison is, therefore, difficult to interpret.
Similarly, waiting for all values to arrive might result in verdicts being missed.
Instead, \textsc{RTLola} permits more fine-grained handling of the asynchronous arrival of data via a combination of \emph{pacing annotations} and a selection of synchronous and asynchronous access operators.

In the previous specification, note that the outputs \stream{drain} and \stream{temp\_warning} are decorated with the pacing annotations \lstinline{@battery_level} and \lstinline{@batterly_level & temperature} respectively.
These pacing annotations are positive boolean formulas over inputs that define when a stream should be evaluated depending on the availability of inputs.
In the above example, \stream{drain} is evaluated whenever a new battery level reading is received, while \stream{temp\_warning} is evaluated only when both a new battery level and temperature reading arrive simultaneously.

\headline{Synchronous and Asynchronous Stream Accesses}

The \stream{temp\_warning} stream issues a temperature warning based on the current values of the \stream{drain} and \stream{temperature} streams.
Such a direct access to the current value of a stream is called a \emph{synchronous} access.
A synchronous access requires the stream being accessed (i.e., the stream the access refers to) to have a value at the same time as the accessing stream (i.e., the stream that contains the access in its expression).
This requirement ensures that the accessed value is indeed available when needed.
In fact, \lstinline|prev| is, by design, also a synchronous access.

However, restricting \stream{temp\_warning} to evaluate only when both inputs receive a value simultaneously is problematic because different sensors generate values at arbitrary, non-overlapping time points.
This could prevent \stream{temp\_warning} from ever being evaluated, thus never issuing the intended warning.
\Cref{fig:timing_diagram:sync} illustrates this issue with an example trace.
Black continuous arrows represent synchronous dependencies, while light gray arrows indicate dependencies due to default values.
Observe that the \stream{temp\_warning} stream is only evaluated at time three when both \stream{temperature} and \stream{battery\_level} receive new values.

To address this, \textsc{RTLola} provides an asynchronous \lstinline|hold| access.
A \lstinline|hold| access retrieves the last produced value of the accessed stream, regardless of when it was generated.
If no such value exists, a default value is returned, similar to \lstinline|prev|.
Unlike \lstinline|prev|, the \lstinline|hold| operator may also reference the current value if the accessed stream produces a value at the same time as the accessing stream.

In summary, \textsc{RTLola} provides three operators for accessing stream values:
\begin{itemize}
  \item A direct stream access references the current value of a stream, which must exist at the time of access.
  \item An asynchronous \lstinline|hold| access returns the most recent stream value or a default value if none exists, effectively approximating the accessed stream to ensure a value at every time point.
  \item The \lstinline|prev| access returns the last value of a stream, requiring that the stream also produces a value at the current time step. This captures the intent of referring to a past value of a stream relative to the pacing of the accessing stream. This is especially useful to compute system dynamics like, for example, in \Cref{ex:starting}.
 
\end{itemize}

Using a \lstinline|hold| access the specification in \Cref{ex:async_spec} can be corrected as follows:
\begin{lstlisting}[label=ex:async_spec_hold, caption={An asynchronous Specification using a Hold Access.}]
  input battery_level: Int
  input temperature: Int

  output drain @battery_level@ := battery_level.prev(or: battery_level) - battery_level
  output temp_warning @battery_level@ := drain < 0 && temperature.hold(or: 0) > 50
\end{lstlisting}
Replacing the synchronous access to \stream{temperature} in \stream{temp\_warning} with a \lstinline|hold| access allows \stream{temp\_warning} to be evaluated whenever a new battery level is measured, reflected by its pacing annotation.
This eliminates the requirement for \stream{temperature} and \stream{battery\_level} to update simultaneously.
\Cref{fig:timing_diagram:hold} illustrates this timing behavior with an example trace.
Continuous black arrows denote synchronous dependencies, dashed arrows indicate asynchronous dependencies and light gray arrows represent default values.
Unlike in \Cref{fig:timing_diagram:sync}, \stream{temp\_warning} is now evaluated at time steps one and five as well.

\begin{figure}[H]
  \begin{subfigure}{0.48\textwidth}
    \includegraphics[width=\linewidth]{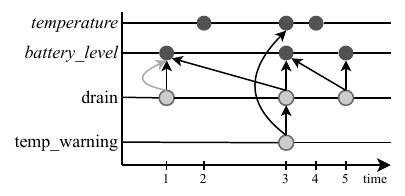}
    \caption{A timing Diagram of the Specification in \Cref{ex:async_spec}}
    \label{fig:timing_diagram:sync}
\end{subfigure}
\hfill
\begin{subfigure}{0.48\textwidth}
    \includegraphics[width=\textwidth]{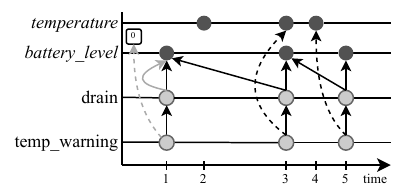}
    \caption{A timing Diagram of the Specification in \Cref{ex:async_spec_hold}}
    \label{fig:timing_diagram:hold}
\end{subfigure}
\caption{A Timing Diagram for the Specification in \Cref{ex:async_spec} and \Cref{ex:async_spec_hold}.}
\label{fig:timing_diagram}
\end{figure}

\headline{Pacing Annotations as a Programming Paradigm}
One could argue that the intent of the \stream{temp\_warning} stream is still not fully captured, as it updates only when the battery level changes, missing potential safety issues at times two and four in \Cref{fig:timing_diagram:hold}.

This issue underscores the need for expressive pacing annotations that go beyond merely tracking dependencies in stream equations.
Instead, the pacing should guide the selection of stream accesses to accurately reflect the intended behavior of a stream.

Applied to the example in \Cref{ex:async_spec_hold}, we first modify the pacing annotation of \stream{temp\_warning} such that the safety condition is checked whenever either the temperature \emph{or} the battery level changes.
We then update its expression to use hold accesses for both streams.
The resulting specification is shown in \Cref{ex:async_spec_hold_hold}.
\begin{figure}[H]
  \centering
    \begin{subfigure}{0.55\textwidth}
\begin{lstlisting}
  input battery_level: Int
  input temperature: Int

  output drain @battery_level@ := 
      battery_level.prev(or: battery_level) - battery_level
  output temp_warning @battery_level | temperature@ := 
      drain.hold(or: 0) < 0 && temperature.hold(or: 0) > 50
\end{lstlisting}
      \caption{A Specification using disjunctive Pacing Types.}
      \label{ex:async_spec_hold_hold}
  \end{subfigure}
  \hfill
  \begin{subfigure}{0.44\textwidth}
\includegraphics[width=\textwidth]{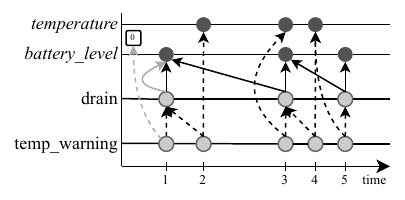}
    \caption{A timing Diagram of the Specification.}
    \label{fig:timing_diagram:hold_hold}
  \end{subfigure}
  \caption{Disjunctive Pacing Types Demonstrated.}
  \label{fig:disjunctive_pacing}
\end{figure}
The timing diagram in \Cref{fig:timing_diagram:hold_hold} shows that, unlike the specification in \Cref{ex:async_spec_hold}, the updated specification evaluates the temperature warning at times two and four, aligning with its intended purpose.
This demonstrates that pacing annotations serve as a mechanism to express temporal intent, enhancing the language’s expressiveness.

\subsection{The Problem}\label{sec:problem}

In practice, monitors generated from specifications must not fail at runtime.
Pacing annotations control data flow within the monitor.
However, not all \textsc{RTLola} programs ensure valid data flow.
Consider the invalid specification in \Cref{fig:invalid:spec}:
\begin{figure}[H]
  \centering
    \begin{subfigure}{0.3\textwidth}
      \begin{lstlisting}
        input a: Int
        input b: Int
        output x @b@ := b
        output y @a@ := x
      \end{lstlisting}
      \caption{An invalid specification}
      \label{fig:invalid:spec}
  \end{subfigure}
  \qquad
  \begin{subfigure}{0.35\textwidth}
      \includegraphics[width=\textwidth]{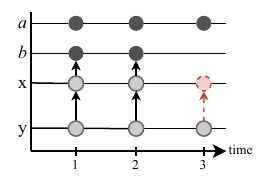}
      \caption{The timing Problem visualized.}
      \label{fig:invalid:timing}
  \end{subfigure}
  \caption{An invalid Example.}
  \label{fig:invalid}
\end{figure}
The pacing annotations require that \stream{y} evaluates whenever \stream{a} receives a new value.
The value of \stream{y} at these times should be the current value of \stream{x}.
Stream \stream{x} must evaluate whenever \stream{b} receives a new value.
If \stream{a} receives a value at runtime but \stream{b} does not, \stream{y} cannot be computed because \stream{x} does not produce a value at this time.
This issue is illustrated in \Cref{fig:invalid:timing}.
Black arrows represent synchronous dependencies between values.
The red dashed arrow and circle at time three indicate the invalid access to the missing value of \stream{x}.
The goal of the type system presented in this paper is to reject such invalid specifications, ensuring that a stream value \emph{can} always be computed when it \emph{must} be computed.

One way to \enquote{repair} the specification in \Cref{fig:invalid:spec} is to replace the synchronous access to \stream{x} in \stream{y} with a \lstinline|hold| access.
However, this change alters the semantics by removing the requirement that \stream{x} must produce a value when \stream{y} evaluates.
With the \lstinline|hold| operator, \stream{y} might always access the same older value of \stream{x} if \stream{x} stops producing values.

\subsection{A Type-driven Solution}

To prevent timing inconsistencies, \textsc{RTLola} uses a type system.
First, we observe that the asynchronous \lstinline|hold| stream access never causes timing inconsistencies, as it inherently provides a default value when the accessed value is absent.
Consequently, the type system focuses on synchronous stream accesses, such as direct or \lstinline|prev| accesses.
More precisely, the type system ensures that a synchronously accessed value is always available when needed at runtime.
To enforce this property, it requires that a stream accessed synchronously evaluates at least at the same time points as the stream containing the access.
This requirement is directly reflected in the example typing rule presented in \Cref{fig:sync_access_rule}.
This typing rule states that accessing
a stream $x$ at pacing $\tau_\mathit{must}$ requires the accessed stream $x$ to have a well-defined value at least at pacing $\tau_\mathit{can}$, where $\tau_\mathit{can}$ is a \emph{finer} annotation than $\tau_\mathit{must}$. This constraint is reflected by the
premise $\tau_\mathit{must} \models \tau_\mathit{can}$ which intuitively ensures that the time points captured by $\tau_\mathit{must}$ are included in those captured by $\tau_\mathit{can}$.

\begin{figure}[ht!]
  \begin{mathpar}
    \inferrule{x \in \mathbb{V}_\texttt{out}\\ (x : \tau_\mathit{can}) \in \Gamma\\ \tau_\mathit{must} \models \tau_\mathit{can}}{\Gamma \vdash x : \tau_\mathit{must}}
  \end{mathpar}
  \caption{Example of typing rule for a direct output stream access}
  \Description{Example of typing rule}
  \label{fig:sync_access_rule}
\end{figure}

In \Cref{sec:clola}, we present the complete type system and prove that it is \emph{sound} in the sense that well-typed specifications are guaranteed to be free of timing inconsistencies.

  \section{\CLola}\label{sec:clola}

\newcommand{\Vin}{\mathbb{V}_\mathit{in}}
\newcommand{\Vout}{\mathbb{V}_\mathit{out}}
\newcommand{\syn}[1]{{\color{brown}\textbf{\texttt{#1}}}}
\newcommand{\stradd}[2]{#1 \ \syn{+} \ #2}
\newcommand{\streq}[2]{#1 \ \syn{==} \ #2}
\newcommand{\strand}[2]{#1 \ \syn{\&\&} \ #2}
\newcommand{\strnot}[1]{\syn{not} \ #1}
\newcommand{\strlt}[2]{#1 \ \syn{<} \ #2}
\newcommand{\strlast}[2]{#1 \syn{.\texttt{prev}(}#2 \syn{)}}
\newcommand{\strhold}[2]{#1 \syn{.\texttt{hold}(}#2 \syn{)}}
\newcommand{\strdef}[3]{#1 \ \syn{@} \ #2 \ \syn{:=} \ #3}
\newcommand{\strwhen}[4]{#1 \ \syn{@} \ #2 \ \syn{when} \ #3 \ \syn{:=} \ #4}
\newcommand{\rhoIn}{{\rho_\textit{in}}}
\newcommand{\rhoOut}{{\rho_\textit{out}}}
\newcommand{\last}[2]{\mathit{Last}(#1, #2)}
\newcommand{\sync}[2]{\mathit{Sync}(#1, #2)}
\newcommand{\prev}[3]{\mathit{Prev}(#1, #2, #3)}
\newcommand{\hold}[3]{\mathit{Hold}(#1, #2, #3)}
\newcommand{\den}[1]{\llbracket #1 \rrbracket}

To formalize pacing types and their properties, we present \CLola, a core language of stream equations.
In essence, \CLola programs represent a set of
constraints over input and output \emph{stream variables}
drawn from (disjoint) sets $\Vin$ and $\Vout$ respectively.
For a fixed set of (named) input streams, these
constraints describe which set of (named) output
streams are valid reactions to the inputs.
Because of this declarative view, we refer to \CLola programs as \emph{specifications}.
For clarity, we only formalize constraints over integer-valued streams, but \CLola can be extended to arbitrary data types in a straightforward manner.

\subsection{Syntax of \CLola}

Constraints on stream variables are expressed via a list of equations of the form
$\strdef{x}{\tau}{e}$. 
In these equations, $x \in \Vout$ is the name of an output stream being constrained, and $e$ is a \emph{stream expression} over input and output variables. Finally, $\tau$ is a positive boolean formula over input variables describing when the output stream $x$ must be evaluated. We call $\tau$ a \emph{pacing annotation}.

Stream expressions can be constants values $v \in \mathbb{Z}$, stream variables $x \in \Vin \uplus \Vout$, arithmetic functions applied to sub-expressions, references to previous values $\strlast{x}{e}$, or \emph{hold} accesses $\strhold{x}{e}$.
In $\strlast{x}{e}$ and $\strhold{x}{e}$,
$x \in \Vin \uplus \Vout$ is the name of an arbitrary input or output stream, and $e$ is a \emph{default expression} representing what value the expression should take if there is no previous value for $x$ (in the case of \syn{prev}) or if there is no value to hold (in the case of \syn{hold}). The exact syntax of specifications, stream expressions, and pacing annotations is defined as:
\vspace{1em}
  \begin{center}
    \begin{tabular}{ll}
      \textit{Variables}          & $x \in \mathbb{V}_\texttt{in} \uplus \mathbb{V}_\texttt{out}$\\
      \textit{Values}             & $v \in \mathbb{Z}$\\
      \textit{Stream Expressions} & $e \ ::= \ v \mid x \mid \strlast{x}{e} \mid \strhold{x}{e} \mid \stradd{e_1}{e_2} \mid \ldots$\\
      \textit{Pacing Annotations} & $\tau ::= x \mid \syn{$\top$} \mid \tau_1 \ \syn{$\wedge$} \ \tau_2 \mid \tau_1 \ \syn{$\vee$} \ \tau_2 $\\
      \textit{Equations}          & $eq ::= \strdef{x}{\tau}{e}$\\
      \textit{Specifications}     & $S ::= \epsilon \mid eq \cdot S$
    \end{tabular}
  \end{center}

\subsection{Semantics of \CLola}

\headline{Streams}

We equip \CLola specifications with a denotational semantics assigning to each spec $S$ a relation $\den{S}$ between input and output streams. Importantly, to account for the asynchronous arrival of inputs (and the asynchronous production of outputs), streams are defined as sequences of \emph{optional} values: at each point in time, a stream can either have a value $v \in \mathbb{Z}$, or be undefined denoted by $\bot$. Formally, the set of streams is defined as follows: \[
  \mathit{Stream} \triangleq (\mathbb{Z} \uplus \{ \bot \})^\mathbb{N}
\]

\headline{Maps}

In order to formally define the denotational semantics of \CLola, we introduce a suitable notion of \emph{stream maps} to represent sets of named streams. Given a set of variables $X \subseteq \Vin \uplus \Vout$, the set of stream maps over $X$ is \[
  \mathit{Smap}(X) \triangleq \mathit{Stream}^X
\]

Given a map $\rho \in \mathit{Smap}(X)$, we note $\mathit{dom}(\rho) \triangleq X$ the domain of $\rho$. We note $\emptyset$ the unique map in $\mathit{Smap}(\emptyset)$.
Given $x \in \Vout$ and $w \in \mathit{Stream}$ we note $(x \mapsto w) \in \mathit{Smap}(\{ x \})$ the singleton map associating $w$ to $x$.
Finally, given two maps $\rho_X \in \mathit{Smap}(X)$ an $\rho_Y \in \mathit{Smap}(Y)$, we define the joined map $\rho_X \cdot \rho_Y$ as follows:
$\rho_X \cdot \rho_Y \in \mathit{Smap}(X \cup Y)$ as follows: \[
  \rho_1 \cdot \rho_2 \triangleq \lambda x.\begin{cases}\begin{tabular}{ll}
    $\rho_1(x)$ &if $x \in \mathit{dom}(\rho_1)$\\
    $\rho_2(x)$ &if $x \notin \mathit{dom}(\rho_1)$ and $x \in \mathit{dom}(\rho_2)$
  \end{tabular}\end{cases}
\]

Importantly, we note that $\cdot$ is \emph{not} a commutative in general but only when
when $X$ and $Y$ are disjoints, as stated by \Cref{lem:disjoint-maps}.

\begin{lemma}\label[plemma]{lem:disjoint-maps} Let $\rho \in \mathit{Smap}(X)$ and $\rho' \in \mathit{Smap}(X')$. \[
  \mathit{dom}(\rho) \cap \mathit{dom}(\rho') = \emptyset \implies \forall x \in \mathit{dom}(\rho). \ (\rho \cdot \rho')(x) = (\rho \cdot \rho')(x) = \rho(x)
\]
\end{lemma}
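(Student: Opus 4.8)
The plan is to prove the equalities by directly unfolding the definition of the join operation $\cdot$ and performing a short case analysis driven by the disjointness hypothesis. Reading the conclusion in light of the surrounding remark about commutativity on disjoint domains, I take the intended chain to be $(\rho \cdot \rho')(x) = (\rho' \cdot \rho)(x) = \rho(x)$. Since both sides of the definition of $\cdot$ are computed pointwise, I would fix an arbitrary $x \in \mathit{dom}(\rho) = X$ and establish the equalities at this single $x$; the universal statement then follows.

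First I would compute $(\rho \cdot \rho')(x)$. Because $x \in \mathit{dom}(\rho)$, the first clause in the definition of $\rho \cdot \rho'$ applies immediately, giving $(\rho \cdot \rho')(x) = \rho(x)$. This direction uses only membership $x \in X$ and no hypothesis about $\rho'$.

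Next I would compute $(\rho' \cdot \rho)(x)$, where the arguments are swapped and the disjointness assumption becomes essential. From $\mathit{dom}(\rho) \cap \mathit{dom}(\rho') = \emptyset$ together with $x \in \mathit{dom}(\rho)$ I obtain $x \notin \mathit{dom}(\rho')$, so the first clause of $\rho' \cdot \rho$ (which would return $\rho'(x)$) does not fire. The second clause requires exactly $x \notin \mathit{dom}(\rho')$ and $x \in \mathit{dom}(\rho)$, both of which now hold, so it applies and yields $(\rho' \cdot \rho)(x) = \rho(x)$. Chaining the two computations gives $(\rho \cdot \rho')(x) = \rho(x) = (\rho' \cdot \rho)(x)$, as required.

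There is essentially no hard step: the statement follows by unfolding the piecewise definition. The only point requiring care is selecting the correct branch of the case split for $\rho' \cdot \rho$, and this is precisely where disjointness enters — without it, $x$ could lie in $\mathit{dom}(\rho')$, the first clause would return $\rho'(x)$ rather than $\rho(x)$, and commutativity on the shared index would fail. This is also the reason the lemma, and the informal remark preceding it, confines the commutativity claim to the disjoint case.
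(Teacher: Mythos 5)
Your proof is correct. The paper states this lemma without any proof at all (it is treated as immediate from the definition of $\cdot$), and your argument---fixing $x \in \mathit{dom}(\rho)$, reading off the first clause for $(\rho \cdot \rho')(x)$, and using disjointness to select the second clause for $(\rho' \cdot \rho)(x)$---is exactly the intended justification; you also correctly diagnosed the typo in the statement, where $(\rho \cdot \rho')(x)$ is written twice but the second occurrence must be $(\rho' \cdot \rho)(x)$ for the lemma to express the commutativity claim made in the surrounding text.
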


In the rest of the paper, $\rhoIn$ and $\rhoOut$ are variables ranging over $\mathit{Smap}(\Vin)$ and $\mathit{Smap}(\Vout)$ respectively. We also use
the notation $\rhoOut$ for variables ranging over $\mathit{Smap}(X)$ where $X$ is only a subset of $\Vout$.

 \headline{Semantics of Stream Expressions}

We defining the semantics of stream expressions. Given a stream map $\rho \in \mathit{Smap}(\Vin \uplus \Vout)$, and a time point $n \in \mathbb{N}$, an expression $e$ will be assigned a value $\den{e}^n_\rho \in \mathbb{Z} \uplus \{ \bad \}$ where $\bad$ indicates that the evaluation of the expression \emph{failed}.
To assign a meaning to direct stream accesses and to the operators $\strlast{-}{-}$ and $\strhold{-}{-}$, we first introduce the following functions $\mathit{Sync}$, $\mathit{Last}$, $\mathit{Prev}$ and $\mathit{Hold}$.
In all definitions below, $w \in \mathit{Stream}$ is a stream of optional values, $n \in \mathbb{N}$ is a time point and $v \in \mathbb{Z} \cup \{ \bad \}$.

\begin{definition}[Stream Operators]
  \begin{align*}
      \sync{w}{n} &\triangleq \begin{cases}
        \begin{tabular}{ll}
          $w(n)$ &if $\ w(n) \neq \bot$\\
          $\bad$ &if $\ w(n) = \bot$
        \end{tabular}
      \end{cases}\\
    \last{w}{n} &\triangleq \begin{cases}
        \begin{tabular}{ll}
          $w(n)$ &if $\ w(n) \neq \bot$\\
          $\last{w}{n-1}$ &if $\ w(n) = \bot \land n > 0$\\
          $\bad$ &if $\ w(n) = \bot \land n = 0$
        \end{tabular}
      \end{cases}\\
      \hold{w}{n}{v} &\triangleq \begin{cases}
        \begin{tabular}{ll}
          $\last{w}{n}$ &\quad \ \ if $\last{w}{n} \neq \bad$\\
          $v$ &\quad \ \ if $\last{w}{n} = \bad$
        \end{tabular}
      \end{cases}\\
      \prev{w}{n}{v} &\triangleq \begin{cases}
        \begin{tabular}{ll}
          $\last{w}{n-1}$ &if $w(n) \neq \bot \land n > 0 \land \last{w}{n-1} \neq \bad$\\
          $v$ &if $w(n) \neq \bot \land (n = 0 \vee n > 0 \land \last{w}{n-1} = \bad)$\\
          $\bad$ &if $w(n) = \bot$
        \end{tabular}
      \end{cases}
  \end{align*}
\end{definition}

The intuition behind these functions can be described as follows: \begin{itemize}[leftmargin=*]
  \setlength\itemsep{.8em}
  \item $\mathit{Sync}(w, n)$ is the integer value of the stream $w$ at time $n$, or $\bad$ if $w$ is undefined at time $n$.
  \item $\last{w}{n}$ is the last defined value of $w$ before the time point $n$ (included).
  If no such value exists (i.e. $w$ never had a defined value before time point $n$ included),
  $\last{w}{n}$ fails and returns $\bad$.
  \item $\hold{w}{n}{v}$ is also the last defined value of $w$ from $n$ (included).
  However, contrary to $\last{w}{n}$, when such a value does not exists (i.e. when $\last{w}{n} = \bad$), $\hold{w}{n}{v}$ produces the default value $v$ instead.
  Said otherwise, $\lambda n. \hold{w}{n}{v}$ can be seen as an approximation of the stream $w$ that is guaranteed to have a defined value at all times.
  \item $\prev{w}{n}{v}$ is slightly more involved.
  To evaluate $\strlast{x}{e}$, we first inspect the current value of stream $x$. If $x$ is defined at the current time point,
  $\strlast{x}{e}$ returns the previous defined value of $x$. If no such value exists (i.e.
  the current time point is the first time point where $x$ is defined) a default value (defined by $e$) is returned instead.
  This intuition is exactly captured by the function $\prev{w}{n}{v}$: it first checks whether
  $w(n)$ is defined. If $w(n) = \bot$, $\mathit{Prev}$ fails and returns $\bad$.
  Otherwise, if $w(n)$ is defined, $\mathit{Prev}$ returns the last defined value
  of $w$ (excluding $w(n)$). If $n$ is the first time point where $w$ is defined, $\mathit{Prev}$
  returns the default value $v$.
\end{itemize}

Using the functions defined above, we define the denotation of stream expressions. 
Constants and arithmetic operations are evaluated in an obvious way. Stream accesses are evaluated using $\mathit{Sync}$, $\mathit{Prev}$ and $\mathit{Hold}$.

\begin{definition}[Denotation of stream expressions]
  \begin{align*}
    \den{c}^n_\rho &\triangleq c\\
    \den{\stradd{e_1}{e_2}}^n_\rho &\triangleq \den{e_1}^n_\rho +_{\bad} \den{e_2}^n_\rho\\
    \den{x}^n_\rho &\triangleq \sync{\rho(x)}{n}\\
    \den{\strlast{x}{e}}^n_\rho &\triangleq \prev{\rho(x)}{n}{\den{e}_\rho^n}\\
    \den{\strhold{x}{e}}^n_\rho &\triangleq \hold{\rho(x)}{n}{\den{e}_\rho^n}
  \end{align*}
  Where the operator $+_{\bad}$ is defined as follows:\[
  v_1 +_{\bad} v_2 \triangleq \begin{cases}
    \begin{tabular}{ll}
      $v_1 + v_2$ &if $v_1 \neq \bad \land v_2 \neq \bad$\\
      $\bad$ &otherwise
    \end{tabular}
  \end{cases}
\]
\end{definition}

\headline{Semantics of Pacing Annotations}
Pacing annotations define the discrete time points at which a stream is evaluated.
Hence, they are interpreted as a set of natural numbers that define these time points.
Syntactically, they are positive boolean formulas over input streams.
A literal in the boolean formula, i.e. an input stream variable, corresponds to the set of time points at which this stream has a defined value different from $\bot$.
The boolean operators then follow naturally from the set intersection and union.
Formally, given $\rhoIn \in \mathit{Smap}(\Vin)$, a pacing annotation $\tau$ denotes the set of time points $\den{\tau}_\rhoIn \subseteq \mathbb{N}$ defined below.

\begin{definition}[Denotation of pacing annotations]
  \begin{align*}
    \den{x}_\rhoIn &\triangleq \{ n \mid \rhoIn(x)(n) \ne \bot \}\\
    \den{\syn{$\top$}}^n_\rhoIn &\triangleq \mathbb{N}\\
    \den{\tau_1 \ \syn{$\vee$} \ \tau_2}_\rhoIn &\triangleq \den{\tau_1}_\rhoIn \cup \den{\tau_2}_\rhoIn\\
    \den{\tau_1 \ \syn{$\wedge$} \ \tau_2}_\rhoIn &\triangleq \den{\tau_1}_\rhoIn \cap \den{\tau_2}_\rhoIn
  \end{align*}
\end{definition}

\headline{Semantics of Specifications}

Having defined the semantics of expressions and pacing annotations, we can define the semantics of equations and specifications. An equation $\strdef{x}{\tau}{e}$ denotes a set of maps $\rho \in \mathit{Smap}(\Vin \cup \Vout)$. A specification (i.e., a list of equations) is interpreted as the intersection of the solutions of its equations.

\begin{definition}[Denotation of Equations]
  \begin{align*}
    \den{\strdef{x}{\tau}{e}} &\triangleq \{ \ \rho = \rhoIn \cdot \rhoOut \mid \forall n \in \den{\tau}_\rhoIn. \rhoOut(x)(n) = \den{e}^n_\rho \ \}\\
    \den{ \epsilon } &\triangleq \mathit{Smap}(\mathbb{V}_\texttt{in}) \cdot \mathit{Smap}(\mathbb{V}_\texttt{out})\\
    \den{ \mathit{eq} \cdot S } &\triangleq \den{\mathit{eq}} \cap \den{S}
  \end{align*}
\end{definition}

The denotation of a single equation $\strdef{x}{\tau}{e}$ is defined to be the set of all maps consistent with the constraint $x = e$.
However, this requirement is refined by only requiring the constraint $x = e$ to hold at the time points described by the pacing annotation $\tau$.
We note that since $\rhoOut(x)(n) \in \mathbb{Z} \cup \{ \bot \}$ and $\den{e}^n_\rho \in \mathbb{Z} \cup \{ \bad \}$, both $x$ and $e$ must have
a well-defined value in $\mathbb{Z}$ at all points in time described by the pacing annotation $\tau$.





  \section{Pacing Types Without Self References}
\label{sec:pacing_types}

In the previous section, we defined the semantics
of specifications by assigning to each lists of equations a
set of compatible stream maps.
An important question to ask is whether there is at least one solution to a list of equations.
As discussed in the overview, this is not always the case: it is easy to write
specifications that do not have any model. Worse, it is possible to write
specifications that only have models for specific combinations of inputs.
The goal of this paper is to provide a type system that enforces the existence of a solution for \emph{any} combination of inputs.

\begin{definition}[Safety]
  A specification $S$ is called \emph{safe} if, for any set of
  named input streams $\rhoIn$, there exists at
  least one compatible set of named output streams $\rhoOut$. Formally \[
    \mathit{Safe}(S) \triangleq \forall \rhoIn \in \mathit{Smap}(\Vin). \exists \rhoOut  \in \mathit{Smap}(\Vout). \ \rhoIn \cdot \rhoOut \in \den{S}
  \]
\end{definition}

In the following, we will define a type system for specifications that ensures safety.
Our type system operates on two levels: first, we define a typing judgment $\Gamma \vdash e : \tau$ for expressions, where $\Gamma$ is a \emph{typing context}, $e$ is a stream expression, and $\tau$ is a pacing annotation.
We also define a typing judgment $\Gamma \vdash S$ to type lists of equations $S$. The typing rules for lists of equations will exploit the typing judgment for expressions.
Typing contexts $\Gamma$ are partial maps from output variables $x \in \Vout$ to annotations $\tau$. We adopt the following usual notation convention $\Gamma, x : \tau \triangleq \Gamma[x \mapsto \tau]$ and $x : \tau \in \Gamma \iff \Gamma(x) = \tau$. 

\subsection{Typing Rules}\label{sec:rules-v1}

As discussed before, the typing rules for \CLola presented operate on two levels: specifications (i.e., lists of equations) and expressions.
On the specification level, the typing context $\Gamma$ \emph{collects} knowledge about the pacing of output streams defined by equations.
On the expression level, the typing rules enforce that a sub-expression can be evaluated, under the assumption that each stream $x : \tau \in \Gamma$ is $\tau$-paced.

\headline{Typing Expressions}

The pacing types rules of \CLola for expressions are presented in \Cref{fig:typing-v1}.
As expected, the interesting type constraints arise from stream accesses.
Each stream access kind comes associated with two typing rules.
Which rule is applicable depends on whether the accessed stream is an input or output stream, distinguished by the \emph{In} and \emph{Out} suffixes of the rule names.

Intuitively, the synchronous accesses, i.e., rules \textsc{DirectOut}, \textsc{DirectIn}, \textsc{PrevOut}, and \textsc{PrevIn}, require that the time points at which the expression \emph{must} be evaluated are subsumed by the time points at which the accessed stream \emph{can} be evaluated.
This is made explicit by the requirement $\tau_\mathit{must} \models \tau_\mathit{can}$ which denotes that type $\tau_\mathit{must}$ refines $\tau_\mathit{can}$.
Formally, this relation is defined as follows:
\[
\tau_\mathit{must} \models \tau_\mathit{can} \triangleq \forall \rhoIn. \den{\tau_\mathit{must}}_\rhoIn \subseteq \den{\tau_\mathit{can}}_\rhoIn
\]
Here, $\tau_\mathit{must}$ is the expected type annotated to the stream equation the expression is associated with.
Whereas $\tau_\mathit{can}$ is either expected to be in the typing context for output streams, i.e. the accessed output stream was correctly typed before, or it is the stream name itself in case of input streams, as pacing types are positive boolean formulas over input stream names.
Additionally, the previous access, i.e. the rules \textsc{PrevOut} and \textsc{PrevIn} require that the default expression can be evaluated at the same times as the access, expressed through the $\Gamma \vdash e : \tau_\mathit{must}$ premise.

Similarly, a hold access also requires that the default expression can be evaluated at the same times as the access itself, yet, it does not require the accessed stream to be of compatible pacing.
On the contrary, hold accesses only require the accessed stream to exist expressed through the $x \in \mathit{dom}(\Gamma)$ and $x \in \Vin$ premises for output and input streams, respectively.
This coincides with the intuition, that a hold access does not require the accessed stream value to exist, as the default value can always be chosen if it does not.
\begin{figure}[ht!]
\begin{mathpar}
  \inferrule[Const]{v \in \mathbb{Z}}{\Gamma \vdash v : \tau}\and
  \inferrule[BinOp]{\Gamma \vdash e_1 : \tau_\mathit{must} \\ \Gamma \vdash e_2 : \tau_\mathit{must} }{\Gamma \vdash \stradd{e_1}{e_2} : \tau_\mathit{must}}\\
  \inferrule[DirectOut]{(x : \tau_\mathit{can}) \in \Gamma\\ \tau_\mathit{must} \models \tau_\mathit{can}}{\Gamma \vdash x : \tau_\mathit{must}}\and
  \inferrule[DirectIn]{x \in \mathbb{V}_\texttt{in} \\ \tau_\mathit{must} \models x }{\Gamma \vdash x : \tau_\mathit{must}}\\
  \inferrule[PrevOut]{(x : \tau_\mathit{can}) \in \Gamma \\ \Gamma \vdash e : \tau_\mathit{must} \\ \tau_\mathit{must} \models \tau_\mathit{can}\\ }{\Gamma \vdash \strlast{x}{e} : \tau_\mathit{must}}\and
  \inferrule[PrevIn]{x \in \Vin \\ \Gamma \vdash e : \tau_\mathit{must} \\ \tau_\mathit{must} \models x\\ }{\Gamma \vdash \strlast{x}{e} : \tau_\mathit{must}}\\
  \inferrule[HoldOut]{x \in \mathit{dom}(\Gamma)\\ \Gamma \vdash e : \tau_\mathit{must}}{\Gamma \vdash \strhold{x}{e} : \tau_\mathit{must}}\and
  \inferrule[HoldIn]{x \in \Vin \\ \Gamma \vdash e : \tau_\mathit{must} }{\Gamma \vdash \strhold{x}{e} : \tau_\mathit{must}}
\end{mathpar}
  \caption{The \CLola pacing type rules for stream expressions.}
  \Description{typing rules}
  \label{fig:typing-v1}
\end{figure}

\headline{Typing Equations}
The typing rules for equations are shown in \Cref{fig:pacing_types_v1_eqs}.
They process the stream equations in the specification $\phi$ in order of their appearance, extending the typing context for each equation processed.
Trivially, the \textsc{Empty} rules states that any typing context is valid for an empty specification.
The \textsc{Eq} rules processes a single stream equation and recurses for the remaining specification while binding the current stream variable ($x$) to the annotated pacing type $\tau$.
For an equation to be valid, the stream variable should not be bound in $\Gamma$, i.e., $x \notin \mathit{dom}(\Gamma)$, which asserts that there is only a single stream equation for the output stream in the specification.
More importantly, it requires that the expression of the stream equation $e$ can be evaluated with the annotated pacing type $\tau$, stated as the premise $\Gamma \vdash e : \tau$.
\begin{figure}
	\begin{mathpar}
  \inferrule[Empty]{ }{\Gamma \vdash \epsilon} \and 
  \inferrule[Equation]{ x \notin \mathit{dom}(\Gamma)\\\Gamma \vdash e : \tau\\ \Gamma, x : \tau \vdash \varphi }{\Gamma \vdash (\strdef{x}{\tau}{e}) \cdot \varphi }
\end{mathpar}
\caption{The \CLola pacing type rules for stream equations.}
\label{fig:pacing_types_v1_eqs}
\end{figure}

\subsection{A Logical Relation for Pacing Types}

Our goal is to prove that a specification that is well-typed according to the rules presented in the previous section is necessarily safe.
We prove this by employing the method of \emph{logical relations}.
We first construct a semantic interpretation of pacing types, from which we derive a \emph{semantic typing judgment} for specifications $\Gamma \vDash S$.
Importantly, the interpretation of pacing types is defined directly in terms of the semantics of specifications, and such that it only contains safe specifications by definition.
To prove the soundness of our type system, it then suffices to show that syntactically well-typed specifications are also semantically well-typed (i.e., $\Gamma \vdash S \implies \Gamma \vDash S$).

\headline{Semantics of Expressions Under Partial Maps}

\newcommand{\pden}[1]{\widehat{\den{#1}}}

In order to define our semantic interpretation of pacing types, we first define an alternative semantics for stream expressions under \emph{partial stream maps} (i.e., maps that do not necessarily associate a stream to every output stream variable). Given a subset $X \subseteq \Vout$ and $\rho \in \mathit{Smap}(\Vin \cup X)$, we define a new interpretation of expressions $\pden{e}_\rho$ which explicitly returns $\bad$ when $e$ cannot be evaluated because a specific output stream is not available in $\rho$.

\begin{definition}[Denotation of stream expressions under partial maps]
  \label{def:partial-sem}
  \begin{align*}
    \pden{c}^n_\rho &\triangleq c\\
    \pden{\stradd{e_1}{e_2}}^n_\rho &\triangleq \pden{e_1}^n_\rho +_{\bad} \pden{e_2}^n_\rho\\
    \pden{x}^n_\rho &\triangleq \begin{cases}
    	\begin{tabular}{ll}
    		$\sync{\rho(x)}{n}$ &if $x \in \mathit{dom}(\rho)$\\
    		$\bad$ &if $x \not\in \mathit{dom}(\rho)$\\
    	\end{tabular}
    \end{cases}\\
    \pden{\strlast{x}{e}}^n_\rho &\triangleq \begin{cases}
    	\begin{tabular}{ll}
    		$\prev{\rho(x)}{n}{\pden{e}_\rho^n}$ &if $x \in \mathit{dom}(\rho)$\\
    		$\bad$ &if $x \not\in \mathit{dom}(\rho)$\\
    	\end{tabular}
    \end{cases}\\
    \pden{\strhold{x}{e}}^n_\rho &\triangleq \begin{cases}
    	\begin{tabular}{ll}
    		$\hold{\rho(x)}{n}{\pden{e}_\rho^n}$ &if $x \in \mathit{dom}(\rho)$\\
    		$\bad$ &if $x \not\in \mathit{dom}(\rho)$\\
    	\end{tabular}
    \end{cases}
  \end{align*}
\end{definition}

We note that for any expression $e$ and any partial map $\rho$ that contains sufficiently many streams for $e$ to be safely evaluated, then $\den{e}$ and $\pden{e}$ coincide in the sense of the following lemma.

\begin{lemma}
  \label[plemma]{lem:pden-den}
  Let $X \subseteq \Vout$, $\rho \in \mathit{Smap}(\Vin \cup X)$ and $\rho' \in \mathit{Smap}(\Vout \setminus X)$. Then \[
    \pden{e}_\rho \ne \bad \implies \pden{e}_\rho = \den{e}_{\rho \cdot \rho'}
  \]
\end{lemma}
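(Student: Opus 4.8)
The plan is to prove the statement by structural induction on the expression $e$, at a fixed but arbitrary time point $n$, reducing everything to the single observation that $\rho$ and $\rho \cdot \rho'$ agree on exactly the streams that $e$ is allowed to access. First I would record the domain situation: since $\rho \in \mathit{Smap}(\Vin \cup X)$ and $\rho' \in \mathit{Smap}(\Vout \setminus X)$ with $X \subseteq \Vout$ and $\Vin$, $\Vout$ disjoint, the domains $\mathit{dom}(\rho) = \Vin \cup X$ and $\mathit{dom}(\rho') = \Vout \setminus X$ are disjoint. Hence \Cref{lem:disjoint-maps} applies and yields $(\rho \cdot \rho')(x) = \rho(x)$ for every $x \in \mathit{dom}(\rho)$. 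This is the only point at which the joined map is unfolded, and it is what forces $\den{\cdot}_{\rho\cdot\rho'}$ and $\pden{\cdot}_\rho$ to coincide once the partial semantics does not fail.

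For the induction I would read the hypothesis pointwise: fix $n$ and assume $\pden{e}^n_\rho \ne \bad$ (establishing this for each such $n$ gives the claimed equality of the two functions). The base cases are immediate: for a constant both semantics return the same value with no hypothesis needed, and for $\stradd{e_1}{e_2}$ the definition of $+_\bad$ forces both $\pden{e_i}^n_\rho \ne \bad$, so the induction hypothesis applies to each operand and the two sums agree. For a direct access $x$, non-failure of $\pden{x}^n_\rho$ forces $x \in \mathit{dom}(\rho)$; then $\sync{\rho(x)}{n} = \sync{(\rho\cdot\rho')(x)}{n}$ by the domain observation, which is exactly $\den{x}^n_{\rho\cdot\rho'}$.

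The interesting cases are $\strlast{x}{e}$ and $\strhold{x}{e}$, and the main subtlety is how the default subexpression interacts with the hypothesis. In both, non-failure again forces $x \in \mathit{dom}(\rho)$, so with $w := \rho(x) = (\rho \cdot \rho')(x)$ the two denotations differ only in the default value supplied to $\mathit{Prev}$ respectively $\mathit{Hold}$. I would then split on whether the default is actually consumed. When it is not — i.e. $\last{w}{n-1} \ne \bad$ for $\mathit{Prev}$, or $\last{w}{n} \ne \bad$ for $\mathit{Hold}$ — both semantics return the same stream-only quantity ($\last{w}{n}$ depends on $w$ alone, not on the map), so equality holds even though the default might itself be $\bad$ and the induction hypothesis unavailable for it. When the default is consumed, the outer expression being $\ne \bad$ forces the default to satisfy $\pden{e}^n_\rho \ne \bad$, so the induction hypothesis gives $\pden{e}^n_\rho = \den{e}^n_{\rho\cdot\rho'}$ and the two default values coincide. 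Either way the denotations agree.

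The step I expect to require the most care is exactly this case split for $\mathit{Prev}$ and $\mathit{Hold}$: one must resist applying the induction hypothesis to the default unconditionally, since it is only guaranteed non-$\bad$ in the branch where it is actually used. It is also worth noting explicitly that no induction on $n$ is needed — the recursion of $\mathit{Last}$ and $\mathit{Prev}$ into earlier time points ranges over the fixed stream $w$ rather than re-evaluating subexpressions, so structural induction on $e$ with $n$ held arbitrary suffices.
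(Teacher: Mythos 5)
Your proposal is correct and follows exactly the paper's approach: the paper proves this lemma by structural induction on $e$ (stated there in one line), and your detailed case analysis --- including the key observation that $\rho$ and $\rho \cdot \rho'$ agree on $\mathit{dom}(\rho)$, and the careful split on whether the default of $\mathit{Prev}$/$\mathit{Hold}$ is actually consumed --- is a faithful elaboration of that induction.
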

\begin{proof}
  By induction on $e$.
\end{proof}

\headline{Interpretation of Pacing Types}

\newcommand{\Lsafe}[1]{\mathcal{S}\llbracket #1 \rrbracket}
  \newcommand{\Lctx}[1]{\mathcal{G}\llbracket #1 \rrbracket}
  \newcommand{\Lstr}[1]{\mathcal{W}\llbracket #1 \rrbracket}
  \newcommand{\Lequ}[1]{\mathcal{E}\llbracket #1 \rrbracket}

Using definition \ref{def:partial-sem}, we construct our semantic interpretation of pacing types.
We define four unary relations $\Lstr{-}$, $\Lctx{-}$, $\Lequ{-}$, and $\Lsafe{-}$. $\Lstr{-}$ associates pacing annotations with streams, $\Lctx{-}$
associate typing contexts with output stream maps, $\Lequ{-}$ associates typing contexts with expressions, and $\Lsafe{-}$ associates typing contexts with specifications.

\begin{definition}[Interpretation of Pacing Types] \ \begin{align*}
    \Lstr{\tau}_{\rho_\texttt{in}} &\triangleq \{ \ w \in \mathit{Stream} \mid \forall n \in \den{\tau}^n_{\rho_\texttt{in}}. w(n) \ne \bot \ \}\\
    \\
    \Lctx{\Gamma}_{\rhoIn} &\triangleq \{ \ \rhoOut \in \mathit{Smap}(\mathit{dom}(\Gamma)) \mid \forall (x : \tau) \in \Gamma. \ \rhoOut(x) \in \Lstr{\tau}_{\rho_\texttt{in}} \ \}\\
    \\
    \Lequ{\Gamma \mid \tau}_{\rho_\texttt{in}} &\triangleq \{ \ e \mid \forall \rho_\texttt{out} \in \Lctx{\Gamma}_\rhoIn. \forall n \in \den{\tau}_\rhoIn. \pden{e}^n_{\rhoIn \cdot \rhoOut} \ne \bad \ \}\\
    \\
    \Lsafe{\Gamma}_{\rho_\texttt{in}} &\triangleq \{ \ S \mid \forall \rho^1_\texttt{out} \in \Lctx{\Gamma}_{\rho_\texttt{in}}. \exists \rho^2 \in \mathit{Smap}(\Vout \setminus \mathit{dom}(\Gamma)) \wedge \rhoIn \cdot \rho^1_\texttt{out} \cdot \rho^2_\texttt{out} \in \den{S} \ \}
  \end{align*}
\end{definition}

The stream relation $\Lstr{\tau}_\rhoIn$ is straightforward: it contains all streams $w$
that are defined at least at all time points in $\den{\tau}_\rhoIn$.
The context relation $\Lctx{\Gamma}_\rhoIn$ extends $\Lstr{-}$ in a natural way:
it contains all partial maps $\rhoOut \in \mathit{Smap}(\mathit{dom}(\Gamma))$ such that
$\rhoOut(x)$ is in the interpretation of $\tau$ for all bindings $x : \tau$ in $\Gamma$.
The expression relation $\Lequ{\Gamma \mid \tau}_\rhoIn$ captures all expressions $e$ that can be safely evaluated at pacing $\tau$ under any output streams compatible with $\Gamma$.
In other words, for any $\rhoOut$ in the interpretation of $\Gamma$, and for any time point denoted by $\tau$, $e$ should safely evaluate to a value in $\mathbb{Z}$.
Note that since $\rhoOut$ is not necessarily a total map in the definition of $\Lequ{-}$, we use $\pden{-}$ instead of $\den{-}$ to evaluate $e$.
Finally, the specification relation $\Lsafe{\Gamma}_\rhoIn$ contains all specifications $S$ such that any partial map in $\Lctx{\Gamma}_\rhoIn$ can be extended into a total map that is a solution to all equations in $S$.

\headline{Semantic Typing}

Based on the interpretation(s) of pacing types, we define the following semantic typing judgments:
  \begin{align*}
    \Gamma \models S &\triangleq \forall \rhoIn.\ S \in \Lsafe{\Gamma}_\rhoIn\\
    \Gamma \models \rho &\triangleq \forall \rhoIn.\ \rho \in \Lctx{\Gamma}_\rhoIn\\
    \Gamma \models e: \tau &\triangleq \forall \rhoIn.\ e \in \Lequ{\Gamma \mid \tau}_\rhoIn
  \end{align*}

Importantly, we note that specifications that are semantically well-typed are safe by definition.

\begin{theorem}[Safety]
  $\emptyset \vDash S \implies \mathit{Safe}(S)$
  \label{thm:safety}
\end{theorem}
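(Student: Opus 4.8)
The plan is to prove the statement by directly unfolding the definitions of $\emptyset \vDash S$, of $\mathit{Safe}(S)$, and of the interpretation relations $\Lsafe{-}$ and $\Lctx{-}$. The logical relation $\Lsafe{-}$ was set up precisely so that it contains only safe specifications, so once we instantiate the semantic typing judgment at the empty context the result should follow essentially by definition. Concretely, I would fix an arbitrary $\rhoIn \in \mathit{Smap}(\Vin)$ and exhibit some $\rhoOut \in \mathit{Smap}(\Vout)$ with $\rhoIn \cdot \rhoOut \in \den{S}$, which is exactly what $\mathit{Safe}(S)$ demands. Unfolding $\emptyset \vDash S$ at this $\rhoIn$ gives $S \in \Lsafe{\emptyset}_\rhoIn$.

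The key step is to compute $\Lctx{\emptyset}_\rhoIn$. Since $\mathit{dom}(\emptyset) = \emptyset$, the set $\mathit{Smap}(\mathit{dom}(\emptyset)) = \mathit{Smap}(\emptyset)$ has the empty map as its unique element, and the binding condition $\forall (x : \tau) \in \emptyset.\ \rhoOut(x) \in \Lstr{\tau}_\rhoIn$ is vacuously satisfied. Hence $\Lctx{\emptyset}_\rhoIn = \{ \emptyset \}$. The important observation is that this set is nonempty: it contains the empty map.

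With this in hand, I instantiate the universally quantified partial map $\rho^1_\texttt{out}$ in the definition of $\Lsafe{\emptyset}_\rhoIn$ at the empty map. This produces some $\rho^2 \in \mathit{Smap}(\Vout \setminus \mathit{dom}(\emptyset)) = \mathit{Smap}(\Vout)$ with $\rhoIn \cdot \emptyset \cdot \rho^2 \in \den{S}$. Using that the empty map is a neutral element for the join operator $\cdot$, which follows directly from the defining case analysis of $\rho_1 \cdot \rho_2$ since $\emptyset$ has empty domain, I simplify $\rhoIn \cdot \emptyset \cdot \rho^2 = \rhoIn \cdot \rho^2$. Taking $\rhoOut \triangleq \rho^2$ then witnesses safety at $\rhoIn$; since $\rhoIn$ was arbitrary, $\mathit{Safe}(S)$ holds.

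I do not expect a genuine obstacle here, as the statement is a straightforward consequence of how the logical relation is constructed. The only point carrying real content is the observation that $\Lctx{\emptyset}_\rhoIn$ is inhabited by the empty map: without this, the universal quantifier in $\Lsafe{-}$ would be vacuously satisfiable and would yield no witness. All the substantive work, namely the fundamental lemma $\Gamma \vdash S \implies \Gamma \vDash S$ connecting syntactic to semantic typing, lies outside this statement and is handled separately.
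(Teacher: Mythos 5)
Your proposal is correct and takes essentially the same route as the paper's own proof: unfold the semantic judgment at an arbitrary $\rhoIn$, instantiate the universally quantified partial map in $\Lsafe{\emptyset}_{\rhoIn}$ at the empty map, and simplify $\rhoIn \cdot \emptyset \cdot \rho^2_\mathit{out}$ to $\rhoIn \cdot \rho^2_\mathit{out}$. The only difference is presentational: you make explicit that $\Lctx{\emptyset}_{\rhoIn}$ is the singleton containing the empty map and that the empty map is neutral for the join, facts the paper uses implicitly when it picks $\rho^1_\mathit{out} = \emptyset$.
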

\begin{proof}
  Suppose $\emptyset \vDash S$ and let $\rhoIn \in \mathit{Smap}(\Vin)$. We have to show the existence of some $\rhoOut \in \mathit{Smap}(\Vout)$ such that $\rhoIn \cdot \rhoOut \in \den{S}$.
  Since $\emptyset \vDash S$, in particular $S \in \Lsafe{\emptyset}_\rhoIn$. By definition it follows that \[
    \forall \rho^1_\mathit{out} \in \Lctx{\emptyset}_\rhoIn. \exists \rho^2_\mathit{out} \in \mathit{Smap}(\Vout). \rhoIn \cdot \rho^1_\mathit{out} \cdot \rho^2_\mathit{out} \in \den{S}
  \]
  By picking $\rho^1_\mathit{out} = \emptyset$ we obtain some $\rho^2_\mathit{out}$ such that $\rhoIn \cdot \rho^1_\mathit{out} \cdot \rho^2_\mathit{out} = \rhoIn \cdot \emptyset \cdot \rho^2_\mathit{out} = \rhoIn \cdot \rho^2_\mathit{out} \in \den{S}$.
  therefore, it suffices to pick $\rhoOut \triangleq \rho^2_\mathit{out}$.
\end{proof}

\subsection{Soundness}

\headline{Soundness for Expressions}

We first focus on the soundness of the type system for expressions.
The theorem is stated as follows: \begin{theorem}[Soundness of Typing for Expressions]
  $\Gamma \vdash e : \tau \implies \Gamma \models e : \tau$
\end{theorem}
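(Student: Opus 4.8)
The plan is to proceed by structural induction on the typing derivation $\Gamma \vdash e : \tau$, equivalently by induction on $e$ with an inversion on the last applied rule. After unfolding the semantic judgment, the goal $\Gamma \models e : \tau$ amounts to the following: for every $\rhoIn$, every $\rhoOut \in \Lctx{\Gamma}_\rhoIn$, and every $n \in \den{\tau}_\rhoIn$, the partial evaluation $\pden{e}^n_{\rhoIn \cdot \rhoOut}$ is different from $\bad$. I would fix these quantified objects at the outset and carry them unchanged through the induction. This is possible because each typing rule whose premises mention sub-expressions (\textsc{BinOp}, \textsc{PrevOut}/\textsc{PrevIn}, \textsc{HoldOut}/\textsc{HoldIn}) types those sub-expressions at the \emph{same} pacing as the conclusion, so the induction hypothesis always applies at the same $\rhoIn$, $\rhoOut$, and $n$. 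The base cases are immediate: for \textsc{Const}, $\pden{v}^n_{\rho} = v \in \mathbb{Z}$; and for \textsc{BinOp}, the hypotheses give $\pden{e_1}^n_{\rhoIn \cdot \rhoOut} \ne \bad$ and $\pden{e_2}^n_{\rhoIn \cdot \rhoOut} \ne \bad$, so their combination under $+_{\bad}$ is an integer sum, hence not $\bad$.

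The heart of the argument lies in the synchronous access cases. For \textsc{DirectOut}, since $x \in \mathit{dom}(\Gamma) = \mathit{dom}(\rhoOut)$ and $\Vin, \Vout$ are disjoint, the partial semantics reduces to $\sync{\rhoOut(x)}{n}$, which fails only when $\rhoOut(x)(n) = \bot$. I would rule this out by combining the refinement premise with the context relation: $\tau_\mathit{must} \models \tau_\mathit{can}$ yields $\den{\tau_\mathit{must}}_\rhoIn \subseteq \den{\tau_\mathit{can}}_\rhoIn$ for the fixed $\rhoIn$, so $n \in \den{\tau_\mathit{can}}_\rhoIn$; and $\rhoOut \in \Lctx{\Gamma}_\rhoIn$ forces $\rhoOut(x) \in \Lstr{\tau_\mathit{can}}_\rhoIn$, i.e.\ $\rhoOut(x)(n) \ne \bot$. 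The \textsc{DirectIn} case is identical, except that $\den{x}_\rhoIn = \{ n \mid \rhoIn(x)(n) \ne \bot \}$ plays the role of $\den{\tau_\mathit{can}}_\rhoIn$, so $\tau_\mathit{must} \models x$ directly gives $\rhoIn(x)(n) \ne \bot$.

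For the \textsc{Prev} and \textsc{Hold} rules I would unfold $\mathit{Prev}$ and $\mathit{Hold}$ and track their failure modes. The key observation is that these functions can emit $\bad$ in two independent ways: $\mathit{Prev}$ fails if the accessed stream is $\bot$ at $n$, and otherwise both $\mathit{Prev}$ and $\mathit{Hold}$ may fall back to the default value, which is itself $\bad$ precisely when the default expression fails. The first mode is ruled out exactly as in the synchronous cases, using $\tau_\mathit{must} \models \tau_\mathit{can}$ (resp.\ $\tau_\mathit{must} \models x$) together with the stream relation to obtain $\rhoOut(x)(n) \ne \bot$; note that \textsc{Hold} does not even require this, since $\mathit{Hold}$ never fails on a $\bot$. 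The second mode is ruled out by the induction hypothesis applied to the default-expression premise $\Gamma \vdash e : \tau_\mathit{must}$, which yields $\pden{e}^n_{\rhoIn \cdot \rhoOut} \ne \bad$ at exactly the pacing $\tau_\mathit{must}$ at which the access is evaluated. I expect \textsc{Prev} to be the main obstacle, not because it is conceptually deep, but because it is the only construct exhibiting both failure modes at once: the $\mathit{Prev}$ definition must be split carefully into its three branches, discharging the $w(n) = \bot$ branch by the refinement argument, the first branch by its built-in side condition $\last{w}{n-1} \ne \bad$, and the fallback branch by the induction hypothesis on the default.
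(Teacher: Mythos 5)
Your proposal is correct and follows essentially the same route as the paper: the paper factors the argument into a per-rule compatibility lemma for the semantic judgment $\Gamma \vDash e : \tau$ followed by an induction on the typing derivation, while you inline the same per-case reasoning into a single induction, instantiating the (universally quantified) induction hypotheses at the fixed $\rhoIn$, $\rhoOut$, and $n$ — which works for exactly the reason you state, namely that all sub-expression premises carry the same context and pacing. Your case analyses (refinement plus the context relation to rule out $\bot$ for synchronous accesses, and the split of $\mathit{Prev}$/$\mathit{Hold}$ failure modes into the undefined-stream branch and the failing-default branch) match the paper's proofs of the corresponding compatibility items.
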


To prove this statement, we show that all typing rules of \Cref{fig:typing-v1} are compatible with semantic typing. In other words, we show that these typing rules can also be derived for the semantic typing judgment $\Gamma \vDash e : \tau$.

\begin{lemma}[Compatibility of Typing Rules for Expressions]
  \label[plemma]{lem:expression-compat}
  The following implications hold:
  \begin{enumerate}
  \item $v \in \mathbb{Z} \implies \Gamma \models v: \tau$
    \item $x \in \Vin \implies \tau \models x \implies \Gamma \vDash x : \tau$
    \item $(x : \tau_\mathit{can}) \in \Gamma \implies \tau_\mathit{must} \models \tau_\mathit{can} \implies \Gamma \vDash x : \tau_\mathit{must}$
    \item $\Gamma \vDash (e_1 : \tau) \implies \Gamma \vDash (e_2 : \tau) \implies \Gamma \vDash \stradd{e_1}{e_2}: \tau$
    \item $(x : \tau_\mathit{can}) \in \Gamma \implies \Gamma \models (e : \tau_\mathit{must}) \implies \tau_\mathit{must} \models \tau_\mathit{can} \implies \Gamma \models \strlast{x}{e}: \tau_\mathit{must} $
    \item $x \in \Vin \implies \Gamma \models (e : \tau_\mathit{must}) \implies \tau_\mathit{must} \models x \implies \Gamma \models \strlast{x}{e}: \tau_\mathit{must} $
    \item $x \in \mathit{dom}(\Gamma) \implies \Gamma \models (e: \tau) \implies \Gamma \models \strhold{x}{e}: \tau$
    \item $x \in \Vin \implies \Gamma \models (e: \tau) \implies \Gamma \models \strhold{x}{e}: \tau$
  \end{enumerate}
\end{lemma}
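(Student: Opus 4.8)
The plan is to prove each of the eight implications separately, since each one corresponds to a single typing rule of \Cref{fig:typing-v1}. In every case I would begin by unfolding the semantic judgment $\Gamma \vDash e : \tau$: it suffices to fix an arbitrary $\rhoIn$, an arbitrary $\rhoOut \in \Lctx{\Gamma}_\rhoIn$, and an arbitrary $n \in \den{\tau}_\rhoIn$, write $\rho = \rhoIn \cdot \rhoOut$, and then establish $\pden{e}^n_\rho \ne \bad$. Two facts are used throughout. First, resolving the join via the definition of $\cdot$ and the disjointness of $\Vin$ and $\Vout$: if $x \in \Vin$ then $\rho(x) = \rhoIn(x)$, and if $x \in \Vout \cap \mathit{dom}(\Gamma) = \mathit{dom}(\rhoOut)$ then $\rho(x) = \rhoOut(x)$. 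Second, a precise description of when each operator produces $\bad$: $\sync{w}{n} = \bad$ iff $w(n) = \bot$; $\hold{w}{n}{v} = \bad$ iff $\last{w}{n} = \bad$ and $v = \bad$; and $\prev{w}{n}{v} = \bad$ iff $w(n) = \bot$ or the default branch is taken with $v = \bad$.

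The structural and direct cases are routine. For \textsc{Const}, $\pden{v}^n_\rho = v \in \mathbb{Z}$; for \textsc{BinOp}, the two hypotheses instantiate at the same $\rhoOut$ and $n$, and $+_\bad$ propagates $\bad$ only from a $\bad$ argument. For the direct accesses (items 2, 3) I reduce to $w(n) \ne \bot$ where $w = \rhoIn(x)$ or $w = \rhoOut(x)$: when $x \in \Vin$, $\tau_\mathit{must} \models x$ gives $n \in \den{\tau_\mathit{must}}_\rhoIn \subseteq \den{x}_\rhoIn$, i.e. $\rhoIn(x)(n) \ne \bot$; when $x \in \Vout$, the membership $\rhoOut \in \Lctx{\Gamma}_\rhoIn$ together with $(x : \tau_\mathit{can}) \in \Gamma$ yields $\rhoOut(x) \in \Lstr{\tau_\mathit{can}}_\rhoIn$, and $\tau_\mathit{must} \models \tau_\mathit{can}$ carries $n$ into $\den{\tau_\mathit{can}}_\rhoIn$, so $\rhoOut(x)(n) \ne \bot$.

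For the hold cases (items 7, 8), the premise $x \in \mathit{dom}(\Gamma)$ (resp. $x \in \Vin$) guarantees $x \in \mathit{dom}(\rho)$, so the partial semantics selects the $\hold{\rho(x)}{n}{\pden{e}^n_\rho}$ branch rather than returning $\bad$ for a missing variable; the single hypothesis $\Gamma \vDash e : \tau$ then gives $\pden{e}^n_\rho \ne \bad$, which by the characterization of $\mathit{Hold}$ suffices regardless of whether $\rho(x)$ is ever defined. No refinement premise is needed, matching the intuition that \texttt{hold} always has a fallback. The prev cases (items 5, 6) are the subtle ones, and I expect them to be the main obstacle: $\prev{w}{n}{v}$ can fail for two independent reasons, so both premises must be combined. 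The refinement premise is used exactly as in the direct cases to force $w(n) \ne \bot$, ruling out the first failure mode, while the default premise $\Gamma \vDash e : \tau_\mathit{must}$ forces $\pden{e}^n_\rho \ne \bad$, ruling out failure on the default branch. The one point demanding care is to verify that the default value appearing inside $\pden{\strlast{x}{e}}^n_\rho$ is literally $\pden{e}^n_\rho$, the quantity controlled by the hypothesis at the matching $\rhoOut$ and $n$, so that the two premises together cover all three clauses of $\mathit{Prev}$.
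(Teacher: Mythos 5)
Your proposal is correct and follows essentially the same route as the paper's proof: unfold the semantic judgment at a fixed $\rhoIn$, $\rhoOut \in \Lctx{\Gamma}_{\rhoIn}$ and $n \in \den{\tau}_{\rhoIn}$, use the refinement premise $\tau_\mathit{must} \models \tau_\mathit{can}$ together with membership in $\Lctx{\Gamma}_{\rhoIn}$ to force $\rho(x)(n) \ne \bot$, and discharge the $\mathit{Prev}$/$\mathit{Hold}$ cases by the same characterization of their failure modes (the paper states the $\mathit{Prev}$ observation explicitly and treats \textsc{Hold} as one of the \enquote{similar} cases). The only difference is one of coverage, not substance: the paper writes out only cases (3), (4), and (5), whereas you sketch all eight with the identical reasoning.
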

\begin{proof}
We focus on three difficult cases, the others are either trivial or similar.

\bigskip\noindent\textbf{(3) Direct access to an output:} Suppose $(x : \tau_\mathit{can}) \in \Gamma$ and $\tau_\mathit{must} \models \tau_\mathit{can}$. We have to show that $\Gamma \models x : \tau_\mathit{must}$. 
By definition of $\Gamma \vDash e$, it suffices to fix an arbitrary $\rhoIn \in \mathit{Smap}(\Vin)$, an arbitrary $\rhoOut \in \Lctx{\Gamma}_\rhoIn$, and it remains to show that \[
  \forall n \in \den{\tau_\mathit{must}}_\rhoIn. \pden{x}^n_{\rhoIn \cdot \rhoOut} \ne \bad
\]

Since $(x : \tau_\mathit{can}) \in \mathit{dom}(\Gamma)$ and $\rhoOut \in \Lctx{\Gamma}_\rhoIn$, we know that $\rhoOut(x) \in \Lstr{\tau_\mathit{can}}_\rhoIn$ and therefore, $\forall n \in \den{\tau_\mathit{can}}. \rhoOut(x)(n) \ne \bot$.
Additionally, since $\tau_\mathit{must} \models \tau_\mathit{can}$, we also know $\forall n \in \den{\tau_\mathit{must}}. \ \rhoOut(x)(n) \ne \bot$.
By definition, it follows $\pden{x}^n_{\rhoIn \cdot \rhoOut} = (\rhoIn \cdot \rhoOut)(x)(n) = \rhoOut(x)(n) \ne \bad$.

\bigskip\noindent\textbf{(4) Addition:}
Suppose $\Gamma \models e_1 : \tau$ and $\Gamma \models e_2 : \tau$. We have to show $\Gamma \models \stradd{e_1}{e_2} : \tau$.
Let $\rhoIn \in \mathit{Smap}(\Vin)$ and $\rhoOut \in \Lctx{\Gamma}_\rhoIn$. We have to show \[
  \forall n \in \den{\tau}_\rhoIn. \pden{\stradd{e_1}{e_2}}^n_{\rhoIn \cdot \rhoOut} \ne \bad
\]

Since $\Gamma \models e_1 : \tau$ and $\Gamma \models e_2 : \tau$, we know that
for all $n \in \den{\tau}_\rhoIn$, $\pden{e_1}^n_{\rhoIn \cdot \rhoOut} \ne \bad$ and $\pden{e_2}^n_{\rhoIn \cdot \rhoOut} \ne \bad$.
In turn, for all $n \in \den{\tau}_\rhoIn $ we have $\pden{\stradd{e_1}{e_2}}^n_{\rhoIn \cdot \rhoOut} = \pden{e_1}^n_{\rhoIn \cdot \rhoOut} +_\bad \pden{e_2}^n_{\rhoIn \cdot \rhoOut} \ne \bad$.

\bigskip\noindent\textbf{(5) Past accesses to output streams:}
Suppose $(x : \tau_\mathit{can}) \in \Gamma$, $\Gamma \models e : \tau_\mathit{must}$ and $\tau_\mathit{must} \models \tau_\mathit{can}$. We have to show $\Gamma \models \strlast{x}{e} : \tau_\mathit{must}$.
Let $\rhoIn \in \mathit{Smap}(\Vin)$ and $\rhoOut \in \Lctx{\Gamma}_\rhoIn$. It suffices to show \[
  \forall n \in \den{\tau_\mathit{must}}_\rhoIn. \pden{\strlast{x}{e}}^n_{\rhoIn \cdot \rhoOut} \ne \bad
\]

Let $n \in \den{\tau}_\mathit{must}$. Since $\Gamma \vDash e : \tau_\mathit{must}$,
we know $\pden{e}^n_{\rhoIn \cdot \rhoOut} \ne \bad$. Additionally, since $x : \tau_\mathit{can} \in \Gamma$ and $\rhoOut \in \Lctx{\Gamma}_\rhoIn$ we know that $x \in \mathit{dom}(\rhoOut)$. Therefore, by definition, $\pden{\strlast{x}{e}}^n_{\rhoIn \cdot \rhoOut} = \mathit{Prev}(\rhoOut(x), n, \pden{e}^n_{\rhoIn \cdot \rhoOut})$. Finally, we observe that for any $n$, any stream $w$ such that $w(n) \ne \bot$, and any value $v \ne \bad$, $\mathit{Prev}(w, n, v) \ne \bad$ (by definition of $\mathit{Prev}$).
Since $\tau_\mathit{must} \models \tau_\mathit{can}$ and $\rhoOut \in \Lctx{\Gamma}_\rhoIn$, clearly $\rhoOut(x)(n) \ne \bad$. Further, we know $\pden{e}^n_{\rhoIn \cdot \rhoOut} \ne \bad$. Consequently, $\pden{\strlast{x}{e}}^n_{\rhoIn \cdot \rhoOut} = \mathit{Prev}(\rhoOut(x), n, \pden{e}^n_{\rhoIn \cdot \rhoOut}) \ne \bad$.
\end{proof}

From the previous lemma we obtain the desired soundness theorem as a simple corollary.

\begin{corollary}
  $\Gamma \vdash e : \tau \implies \Gamma \vDash e : \tau$
\end{corollary}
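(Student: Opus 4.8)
The plan is to prove the corollary by structural induction on the derivation of the syntactic judgment $\Gamma \vdash e : \tau$. The expression type system of \Cref{fig:typing-v1} has exactly eight rules, and \Cref{lem:expression-compat} supplies exactly eight closure properties for the semantic judgment $\Gamma \vDash e : \tau$, with clause $(i)$ mirroring one typing rule. The whole argument is thus a rule-by-rule translation: in the case of each rule I would close the goal by invoking the corresponding clause of \Cref{lem:expression-compat}, passing it the side conditions of the rule unchanged and supplying induction hypotheses for the premises that are themselves expression-typing judgments.

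Concretely, the three axiom (leaf) rules need no induction hypothesis. For \textsc{Const} I would apply clause (1); for \textsc{DirectIn} clause (2), reusing the premise $\tau \models x$; and for \textsc{DirectOut} clause (3), reusing $(x : \tau_\mathit{can}) \in \Gamma$ together with $\tau_\mathit{must} \models \tau_\mathit{can}$. The remaining five rules contain sub-expressions and therefore use the induction hypothesis. For \textsc{BinOp}, the induction hypothesis gives $\Gamma \vDash e_1 : \tau$ and $\Gamma \vDash e_2 : \tau$, and clause (4) then yields $\Gamma \vDash \stradd{e_1}{e_2} : \tau$. For \textsc{PrevOut} and \textsc{PrevIn} I would apply the induction hypothesis to the default-expression premise $\Gamma \vdash e : \tau_\mathit{must}$ to obtain $\Gamma \vDash e : \tau_\mathit{must}$, and then discharge the goal $\Gamma \vDash \strlast{x}{e} : \tau_\mathit{must}$ via clauses (5) and (6) using the accompanying refinement ($\tau_\mathit{must} \models \tau_\mathit{can}$) or membership ($x \in \Vin$) side condition. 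Finally, \textsc{HoldOut} and \textsc{HoldIn} follow from clauses (7) and (8), again using the induction hypothesis on the default expression together with $x \in \mathit{dom}(\Gamma)$ or $x \in \Vin$, respectively.

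I do not expect a genuine obstacle here, because all the semantic content has already been discharged in \Cref{lem:expression-compat}: the reasoning about the partial-map semantics $\pden{-}$, the context interpretation $\Lctx{\Gamma}_\rhoIn$, and the behaviour of $\mathit{Sync}$, $\mathit{Prev}$, and $\mathit{Hold}$ all lives in that lemma. The only thing I would explicitly check is that the correspondence between the eight typing rules and the eight clauses of the lemma is exhaustive and one-to-one, so that every case of the induction is covered; this holds by inspection. Given that, the induction closes immediately and yields $\Gamma \vdash e : \tau \implies \Gamma \vDash e : \tau$.
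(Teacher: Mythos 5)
Your proof is correct and matches the paper's own argument exactly: the paper also proceeds by induction on the derivation $\Gamma \vdash e : \tau$, discharging each case with the corresponding clause of \Cref{lem:expression-compat}. Your version is simply more explicit about the one-to-one correspondence between rules and clauses (and incidentally fixes the paper's off-by-one slip, which cites items ``(1) to (7)'' although all eight clauses are needed).
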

\begin{proof}
  By immediate induction on the derivation $\Gamma \vdash e : \tau$, using items $(1)$ to $(7)$ from \Cref{lem:expression-compat} to handle each case.
\end{proof}

\headline{Soundness for Specifications}

Now that we established the soundness of the typing rules for expressions, we still need to prove that the typing rules for lists of equations are sound. The theorem we wish to prove is stated as follows.

\begin{theorem}[Soundness of Typing for Specifications]
  $\Gamma \vdash S \implies \Gamma \models S$
\end{theorem}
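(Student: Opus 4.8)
The plan is to proceed by induction on the derivation of $\Gamma \vdash S$, whose two rules \textsc{Empty} and \textsc{Equation} furnish the base case and the inductive step. In both cases I unfold the goal $\Gamma \models S$: I fix an arbitrary $\rhoIn$ and an arbitrary partial map $\rho^1_\mathit{out} \in \Lctx{\Gamma}_\rhoIn$, and I must exhibit an extension $\rho^2_\mathit{out} \in \mathit{Smap}(\Vout \setminus \mathit{dom}(\Gamma))$ with $\rhoIn \cdot \rho^1_\mathit{out} \cdot \rho^2_\mathit{out} \in \den{S}$. For the base case $S = \epsilon$, recall that $\den{\epsilon}$ is the set of all total maps, so any extension witnesses the claim; I would simply take $\rho^2_\mathit{out}$ to be the constantly-$\bot$ map on $\Vout \setminus \mathit{dom}(\Gamma)$ and conclude at once.

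The inductive step treats $S = (\strdef{x}{\tau}{e}) \cdot \varphi$, where \textsc{Equation} supplies $x \notin \mathit{dom}(\Gamma)$, $\Gamma \vdash e : \tau$, and $\Gamma, x : \tau \vdash \varphi$. The heart of the argument is to manufacture a witness stream $w$ for $x$ and then invoke the induction hypothesis on $\varphi$. Using the already-established soundness for expressions, $\Gamma \vdash e : \tau$ yields $e \in \Lequ{\Gamma \mid \tau}_\rhoIn$, so $\pden{e}^n_{\rhoIn \cdot \rho^1_\mathit{out}} \ne \bad$ for every $n \in \den{\tau}_\rhoIn$. I therefore set $w(n) \triangleq \pden{e}^n_{\rhoIn \cdot \rho^1_\mathit{out}}$ whenever $n \in \den{\tau}_\rhoIn$ and $w(n) \triangleq \bot$ otherwise; by construction $w$ is a genuine stream and $w \in \Lstr{\tau}_\rhoIn$. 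Since $x \notin \mathit{dom}(\Gamma)$, the extended map $\rho^1_\mathit{out} \cdot (x \mapsto w)$ lies in $\Lctx{\Gamma, x : \tau}_\rhoIn$, so applying the induction hypothesis $\Gamma, x : \tau \models \varphi$ to it gives a map $\rho^2$ on $\Vout \setminus \mathit{dom}(\Gamma, x : \tau)$ with $\rhoIn \cdot \rho^1_\mathit{out} \cdot (x \mapsto w) \cdot \rho^2 \in \den{\varphi}$. I then take $\rho^2_\mathit{out} \triangleq (x \mapsto w) \cdot \rho^2$, which is defined exactly on $\Vout \setminus \mathit{dom}(\Gamma)$.

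It remains to verify both conjuncts of $\den{S} = \den{\strdef{x}{\tau}{e}} \cap \den{\varphi}$ for the full map $\rho \triangleq \rhoIn \cdot \rho^1_\mathit{out} \cdot \rho^2_\mathit{out}$. Membership in $\den{\varphi}$ follows by reassociating and reordering the join via \Cref{lem:disjoint-maps} (the domains $\mathit{dom}(\Gamma)$, $\{x\}$, and $\Vout \setminus \mathit{dom}(\Gamma, x:\tau)$ are pairwise disjoint), so that $\rho$ coincides with the map produced by the induction hypothesis. For membership in $\den{\strdef{x}{\tau}{e}}$, I must show $\rho(x)(n) = \den{e}^n_\rho$ for every $n \in \den{\tau}_\rhoIn$; since $\rho(x) = w$, the obligation reduces to $w(n) = \den{e}^n_\rho$. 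I close this gap with \Cref{lem:pden-den}, instantiated at $X = \mathit{dom}(\Gamma)$ with the extension $\rho^2_\mathit{out}$: as $\pden{e}^n_{\rhoIn \cdot \rho^1_\mathit{out}} \ne \bad$, it equals $\den{e}^n_{\rhoIn \cdot \rho^1_\mathit{out} \cdot \rho^2_\mathit{out}} = \den{e}^n_\rho$, giving $w(n) = \den{e}^n_\rho$ as required.

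I expect the main obstacle to be precisely this reconciliation of the two semantics, together with the careful bookkeeping of domains under the non-commutative join $\cdot$. The tension is that the witness stream $w$ can only be defined from the information available at that stage, namely the partial map $\rho^1_\mathit{out}$ and hence $\pden{-}$, yet it must ultimately satisfy an equation phrased in terms of the total semantics $\den{-}$ under a map whose remaining components are only produced later by the induction hypothesis. The construction has to be arranged so that the domains line up exactly for \Cref{lem:pden-den} and \Cref{lem:disjoint-maps} to apply, which is the delicate part of the proof.
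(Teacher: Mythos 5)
Your proof is correct and follows essentially the same route as the paper: the paper merely factors the inductive step into a separate compatibility lemma (its \Cref{lem:equation-compat}) and then derives the theorem by induction, whereas you inline that content directly into the induction. The key ingredients are identical — the same witness stream $w$ built from $\pden{e}$ over $\den{\tau}_\rhoIn$, the same appeal to \Cref{lem:pden-den} to reconcile $\pden{-}$ with $\den{-}$, and the same domain bookkeeping for the join.
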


As for typing of expressions, we decompose this proof in two lemmas, each corresponding to the soundness of one typing rule for stream equations.

\begin{lemma}[Soundness of typing rules for equations]
  \label[plemma]{lem:equation-compat}
  The following two statements hold:
  \begin{enumerate}
  \item $\Gamma \vDash \epsilon$
  \item $x \notin \mathit{dom}(\Gamma) \implies \Gamma, x : \tau \vDash S \implies \Gamma \vDash e : \tau \implies \Gamma \vDash (\strdef{x}{\tau}{e}) \cdot S$
\end{enumerate}
\end{lemma}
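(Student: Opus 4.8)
The plan is to prove the two statements separately: statement (1) is immediate, while statement (2) requires explicitly constructing the stream that the new equation forces upon $x$, feeding it to the induction hypothesis for $S$, and then verifying the single fresh equation.

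For statement (1), I would unfold $\Gamma \vDash \epsilon$: fixing an arbitrary $\rhoIn$ and $\rho^1_\mathit{out} \in \Lctx{\Gamma}_\rhoIn$, I must exhibit some $\rho^2_\mathit{out} \in \mathit{Smap}(\Vout \setminus \mathit{dom}(\Gamma))$ with $\rhoIn \cdot \rho^1_\mathit{out} \cdot \rho^2_\mathit{out} \in \den{\epsilon}$. Since $\den{\epsilon}$ is the set of all total maps over $\Vin \cup \Vout$, any choice of $\rho^2_\mathit{out}$ (e.g. the everywhere-$\bot$ map) works, because the resulting join is a total map and hence trivially in $\den{\epsilon}$.

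For statement (2), fix $\rhoIn$ and $\rho^1_\mathit{out} \in \Lctx{\Gamma}_\rhoIn$. I define the stream $w_x$ by $w_x(n) \triangleq \pden{e}^n_{\rhoIn \cdot \rho^1_\mathit{out}}$ whenever $n \in \den{\tau}_\rhoIn$ and $w_x(n) \triangleq \bot$ otherwise. The hypothesis $\Gamma \vDash e : \tau$ guarantees $\pden{e}^n_{\rhoIn \cdot \rho^1_\mathit{out}} \ne \bad$ on $\den{\tau}_\rhoIn$, so $w_x$ takes genuine values in $\mathbb{Z}$ exactly where required and is a well-formed stream lying in $\Lstr{\tau}_\rhoIn$. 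Using $x \notin \mathit{dom}(\Gamma)$, the extended partial map $\rho^1_\mathit{out} \cdot (x \mapsto w_x)$ then belongs to $\Lctx{\Gamma, x : \tau}_\rhoIn$. Applying the induction hypothesis $\Gamma, x : \tau \vDash S$ to this map yields some $\rho^2_\mathit{out} \in \mathit{Smap}(\Vout \setminus \mathit{dom}(\Gamma, x : \tau))$ with $\rhoIn \cdot (\rho^1_\mathit{out} \cdot (x \mapsto w_x)) \cdot \rho^2_\mathit{out} \in \den{S}$. I take $(x \mapsto w_x) \cdot \rho^2_\mathit{out}$ as the witness for the goal; its domain is exactly $\Vout \setminus \mathit{dom}(\Gamma)$, as needed.

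It remains to check that the full join $\rhoIn \cdot \rho^1_\mathit{out} \cdot (x \mapsto w_x) \cdot \rho^2_\mathit{out}$ lies in $\den{\strdef{x}{\tau}{e}} \cap \den{S}$. Membership in $\den{S}$ is inherited from the induction hypothesis, up to re-associating the joins, which is sound because the four domains $\Vin$, $\mathit{dom}(\Gamma)$, $\{x\}$, and $\Vout \setminus \mathit{dom}(\Gamma, x : \tau)$ are pairwise disjoint (cf. \Cref{lem:disjoint-maps}). The crux is the equation $\den{\strdef{x}{\tau}{e}}$: for each $n \in \den{\tau}_\rhoIn$ I must show the join sends $x$ to $\den{e}^n_{\rhoIn \cdot \rho^1_\mathit{out} \cdot (x \mapsto w_x) \cdot \rho^2_\mathit{out}}$. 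By construction the join sends $x$ to $w_x$, and $w_x(n) = \pden{e}^n_{\rhoIn \cdot \rho^1_\mathit{out}}$. The main obstacle, and the only real content of the argument, is bridging the \emph{partial} semantics used to define $w_x$ with the \emph{total} semantics appearing in $\den{\strdef{x}{\tau}{e}}$. This is precisely \Cref{lem:pden-den}: instantiated with $X = \mathit{dom}(\Gamma)$ and $\rho' = (x \mapsto w_x) \cdot \rho^2_\mathit{out}$, and using $\pden{e}^n_{\rhoIn \cdot \rho^1_\mathit{out}} \ne \bad$, it gives $\pden{e}^n_{\rhoIn \cdot \rho^1_\mathit{out}} = \den{e}^n_{\rhoIn \cdot \rho^1_\mathit{out} \cdot (x \mapsto w_x) \cdot \rho^2_\mathit{out}}$. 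Chaining these equalities yields $w_x(n) = \den{e}^n_{\rhoIn \cdot \rho^1_\mathit{out} \cdot (x \mapsto w_x) \cdot \rho^2_\mathit{out}}$, which closes the equation and hence the lemma.
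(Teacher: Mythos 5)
Your proposal is correct and matches the paper's own proof essentially step for step: the same witness stream (values $\pden{e}^n_{\rhoIn \cdot \rho^1_\mathit{out}}$ on $\den{\tau}_\rhoIn$, $\bot$ elsewhere), the same application of the hypothesis $\Gamma, x : \tau \vDash S$ to the extended map $\rho^1_\mathit{out} \cdot (x \mapsto w_x)$, and the same use of \Cref{lem:pden-den} to transfer the partial-map evaluation to the total-map semantics in $\den{\strdef{x}{\tau}{e}}$. The only cosmetic difference is that you justify the re-association of joins explicitly via disjointness of domains (cf. \Cref{lem:disjoint-maps}), which the paper leaves implicit.
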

\begin{proof}
  We prove the two implications separately.

  \bigskip\noindent\textbf{Proof of (1)}:
  Let $\rhoIn \in \mathit{Smap}(\Vin)$ and $\rho^1_\mathit{out} \in \Lctx{\Gamma}_\rhoIn$. We have to show the existence of some $\rho^2_\mathit{out} \in \mathit{Smap}(\Vout \setminus \mathit{dom}(\Gamma))$ such that $\rhoIn \cdot \rho^1_\mathit{out} \cdot \rho^2_\mathit{out} \in \den{\epsilon}$
  Since $\den{\epsilon} = \mathit{Smap}(\Vin) \cdot \mathit{Smap}(\Vout)$, it suffices to pick any $\rho^2_\mathit{out} \in \mathit{Smap}(\Vout \setminus \mathit{dom}(\Gamma))$.
  We pick $\rho^2_\mathit{out} = \lambda \_. \lambda \_. \bot$.

  \bigskip\noindent\textbf{Proof of (2)}:
  Let $x \notin \mathit{dom}(\Gamma)$, and suppose that $\Gamma, x : \tau \vDash S$ and $\Gamma \vDash e : \tau$. We have to show $\Gamma \vDash (\strdef{x}{\tau}{e}) \cdot S$.
  Let $\rhoIn \in \mathit{Smap}(\Vin)$, and $\rho^1_\mathit{out} \in \Lctx{\Gamma}_\rhoIn$. We need to show the existence of some $\rho^2_\mathit{out} \in \mathit{Smap}(\Vout \setminus \mathit{dom}(\Gamma))$ such that $
    \rhoIn \cdot \rho^1_\mathit{out} \cdot \rho^2_\mathit{out} \in \den{(\strdef{x}{\tau}{e}) \cdot S} = \den{\strdef{x}{\tau}{e}} \cap \den{S}
  $.
  Since $\Gamma \vDash e : \tau$ and $\rho^1_\mathit{out} \in \Lctx{\Gamma}_\rhoIn$, we know that $\forall n \in \den{\tau}_\rhoIn. \pden{e}^n_{\rhoIn \cdot \rho^1_\mathit{out}} \ne \bad$.
  Therefore, we can construct the following stream $w_e \in \mathit{Stream}$: \[
    w_e \triangleq \lambda n. \begin{cases}\begin{tabular}{ll}
      $\pden{e}^n_{\rhoIn \cdot \rho^1_\mathit{out}}$ &\textrm{if $n \in \den{\tau}$}\\
      $\bot$ &\textrm{otherwise}
    \end{tabular}\end{cases}
  \]
  Clearly $\rho^1_\mathit{out} \cdot (x \mapsto w_e) \in \Lctx{\Gamma, x : \tau}_\rhoIn$ and since $\Gamma, x : \tau \vDash S$, there exists $\rho^S_\mathit{out} \in \mathit{Smap}(\Vout \setminus \mathit{dom}(\Gamma, x : \tau))$ such that $
    \rhoIn \cdot (\rho^1_\mathit{out} \cdot (x \mapsto w_e)) \cdot{\rho^S_\mathit{out}} = \rhoIn \cdot \rho^1_\mathit{out} \cdot ((x \mapsto w_e) \cdot \rho^S_\mathit{out}) \in \den{S}
  $.
  Now we can pick $\rho^2_\mathit{out} \triangleq (x \mapsto w_e) \cdot \rho^S_\mathit{out}$. As stated above, $\rhoIn \cdot \rho^1_\mathit{out} \cdot \rho^2_\mathit{out} \in \den{S}$. It only remains to show that $\rhoIn \cdot \rho^1_\mathit{out} \cdot \rho^2_\mathit{out} \in \den{\strdef{x}{\tau}{e}}$. By definition of the semantics of equations, this means we have to prove $\forall n \in \den{\tau}_\rhoIn. (\rho^1_\mathit{out} \cdot \rho^2_\mathit{out})(x)(n) = \den{e}^n_{\rhoIn \cdot \rho^1_\mathit{out} \cdot \rho^2_\mathit{out}}$.
  First, let us observe that $(\rho^1_\mathit{out} \cdot \rho^2_\mathit{out})(x) = (\rho^1_\mathit{out} \cdot (x \mapsto w_e) \cdot \rho^S_\mathit{out})(x) = w_e$ since $x \notin \mathit{dom}(\rho^1_\mathit{out})$ and $x \notin \mathit{dom}(\rho^S_\mathit{out})$.
  Further, we already know that for all $n \in \den{\tau}_\rhoIn$, $\pden{e}^n_{\rhoIn \cdot \rho^1_\mathit{out}} \ne \bad$. Therefore, by \Cref{lem:pden-den}, it follows that for any $n \in \den{\tau}_\rhoIn$, we have $
    \pden{e}^n_{\rhoIn \cdot \rho^1_\mathit{out}} = \den{e}^n_{\rhoIn \cdot \rho^1_\mathit{out} \cdot \rho^2_\mathit{out}}$, which concludes our proof.
\end{proof}

From the two items of \Cref{lem:equation-compat}, we easily obtain the desired theorem as a corollary.

\begin{corollary}
  \label{sem-soundness}
  $\Gamma \vdash S \implies \Gamma \vDash S$
\end{corollary}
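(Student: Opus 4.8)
The plan is to prove the statement by structural induction on the derivation of the syntactic judgment $\Gamma \vdash S$. Since the typing rules for specifications in \Cref{fig:pacing_types_v1_eqs} are syntax-directed, there are exactly two cases to consider, one for each rule, and in each case the goal is to discharge the corresponding semantic obligation by appealing to the matching item of \Cref{lem:equation-compat}.

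First I would handle the base case, corresponding to the rule \textsc{Empty}. Here $S = \epsilon$, and the conclusion $\Gamma \vDash \epsilon$ is literally item (1) of \Cref{lem:equation-compat}, so nothing further is required.

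The inductive case corresponds to the rule \textsc{Equation}, where $S = (\strdef{x}{\tau}{e}) \cdot \varphi$ and the derivation supplies three premises: $x \notin \mathit{dom}(\Gamma)$, a derivation of $\Gamma \vdash e : \tau$, and a strictly smaller derivation of $\Gamma, x : \tau \vdash \varphi$. To assemble the hypotheses of item (2) of \Cref{lem:equation-compat}, I would (i) pass the first premise through unchanged, (ii) apply the soundness corollary for expressions to the second premise to obtain the semantic judgment $\Gamma \vDash e : \tau$, and (iii) apply the induction hypothesis to the sub-derivation of $\Gamma, x : \tau \vdash \varphi$ to obtain $\Gamma, x : \tau \vDash \varphi$. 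Feeding these three facts into item (2) yields exactly $\Gamma \vDash (\strdef{x}{\tau}{e}) \cdot \varphi$, which closes the induction.

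There is no genuine obstacle at this stage, since the real content has already been isolated in \Cref{lem:equation-compat} and in the expression-level soundness corollary; this proof merely threads them together along the recursion on the typing derivation. The one point that requires care is the two-level structure of the type system: the inductive step cannot be closed by the induction hypothesis alone, but must also invoke the separately established expression soundness result to convert the premise $\Gamma \vdash e : \tau$ into its semantic counterpart before \Cref{lem:equation-compat}(2) becomes applicable. I would also be careful to perform the induction on the typing \emph{derivation} rather than on the length of the list $S$, so that the extended context $\Gamma, x : \tau$ appearing in the recursive premise is handled uniformly by the induction hypothesis.
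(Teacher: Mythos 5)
Your proposal is correct and matches the paper's proof essentially verbatim: both proceed by induction on the typing derivation, dispatch the \textsc{Empty} case via item (1) of \Cref{lem:equation-compat}, and close the \textsc{Equation} case by combining the induction hypothesis on $\Gamma, x : \tau \vdash \varphi$ with expression-level soundness and item (2) of \Cref{lem:equation-compat}. Your explicit remark that expression soundness (rather than the induction hypothesis) handles the premise $\Gamma \vdash e : \tau$ is exactly the structure of the paper's inductive case.
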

\begin{proof}
  By induction on the typing derivation $\Gamma \vdash S$.

  \smallskip\noindent\textbf{Base case}: Suppose $S = \epsilon$ and $\Gamma \vdash S$. By item (1) of \Cref{lem:equation-compat}, we know $\Gamma \vDash S$.

  \smallskip\noindent\textbf{Inductive case}: Suppose $x \notin \mathit{dom}(\Gamma)$, $\Gamma \vdash e : \tau$, and $\Gamma, x : \tau \vdash S$.
  By induction, we also have $\Gamma, x : \tau \vDash S$. Further, since $\Gamma \vdash e : \tau$, by \Cref{lem:expression-compat} we have $\Gamma \vDash e : \tau$.
  By item (2) of \Cref{lem:equation-compat}, it follows that $\Gamma \vDash \strdef{x}{e}{\tau} \cdot S$.
\end{proof}

Finally, type-safety simply follows from \Cref{sem-soundness} and \Cref{thm:safety}.

\begin{corollary}
  $\epsilon \vdash S \implies \mathit{Safe}(S)$
\end{corollary}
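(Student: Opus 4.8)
The plan is to obtain the corollary as an immediate composition of the two results already established: the semantic soundness corollary \Cref{sem-soundness} and the Safety theorem \Cref{thm:safety}. The only mild subtlety is notational: the symbol $\epsilon$ in the hypothesis $\epsilon \vdash S$ denotes the empty typing context, which coincides with the empty map $\emptyset$ appearing in \Cref{thm:safety}, since both are the unique object with empty domain. Once this identification is made, no genuine semantic reasoning remains to be done.

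Concretely, I would assume $\epsilon \vdash S$, i.e.\ that $S$ is syntactically well-typed under the empty context. Instantiating \Cref{sem-soundness} (the implication $\Gamma \vdash S \implies \Gamma \vDash S$) with $\Gamma \triangleq \emptyset$ yields $\emptyset \vDash S$. Applying \Cref{thm:safety}, whose hypothesis is precisely $\emptyset \vDash S$ and whose conclusion is $\mathit{Safe}(S)$, then discharges the goal. Chaining these two implications gives $\epsilon \vdash S \implies \mathit{Safe}(S)$.

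There is essentially no obstacle here, since the argument is a one-line transitivity step through the two corollaries/theorems already proved. The only point one might worry about is whether the empty context is actually inhabited by a compatible output map, so that \Cref{thm:safety} is applicable; but this is already handled inside the proof of \Cref{thm:safety}, where the empty map $\emptyset \in \Lctx{\emptyset}_\rhoIn$ is chosen (the membership condition being vacuously satisfied since $\emptyset$ has empty domain). Hence merely invoking the two prior results suffices to conclude.
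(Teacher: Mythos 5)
Your proposal is correct and matches the paper's proof exactly: the paper likewise assumes $\epsilon \vdash S$, applies \Cref{sem-soundness} to obtain semantic well-typedness under the empty context, and then concludes $\mathit{Safe}(S)$ by \Cref{thm:safety}. Your remark on identifying the empty context $\epsilon$ with $\emptyset$ is a reasonable clarification of a notational identification the paper makes silently.
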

\begin{proof}
  Suppose $\epsilon \vdash S$.
  By \Cref{sem-soundness} we have $\epsilon \vDash S$. By \Cref{thm:safety} this implies $\mathit{Safe}(S)$.
\end{proof}

\subsection{Equation Ordering}

The typing rules for equations in \Cref{sec:rules-v1} process specification equations in their given order.
Since the semantics of \CLola does not depend on this order, the type system may reject semantically valid specifications based on equation order alone, as demonstrated by the following two examples: 
\begin{center}
\begin{minipage}{0.35\textwidth}
	\begin{lstlisting}[label=fig:ordering:wrong,caption={A specification ill-typed due to equation ordering.}]
    	input i: Int
    	output x @i@ := y
    	output y @i@ := i
    \end{lstlisting}
\end{minipage}
\hspace{3em}
\begin{minipage}{0.35\textwidth}
	 \begin{lstlisting}[label=fig:ordering:corrected,caption={A reordered well-typed specification.}]
    	input i: Int
    	output y @i@ := i
    	output x @i@ := y
    \end{lstlisting}
\end{minipage}
\end{center}

The specifications in \Cref{fig:ordering:wrong} and \Cref{fig:ordering:corrected} differ only in the order of the stream equations for $x$ and $y$.
Both are semantically valid with respect to their annotated pacing types.
However, the type system in \Cref{sec:rules-v1} rejects the specification in \Cref{fig:ordering:wrong} because it type-checks the equation for $x$ first, where the expression directly accesses stream $y$.
The typing rule for direct access requires $y$'s type to be present in the type environment $\Gamma$, but $y$ is processed and therefore added to $\Gamma$ only after $x$.

A simple solution is to swap the order of $x$ and $y$, as shown in \Cref{fig:ordering:corrected}.
To generalize this, we introduce a specification-level rule: a specification is well-typed if some permutation of its equations is well-typed.

%
%
\[
  \inferrule[Permutation]{\exists S' \in \mathit{Permut}(S). \Gamma \vdash S'}{\Gamma \vdash S}
\]
Note that adding this rule preserves the soundness result, as the semantics of \CLola is independent of the order of equations.
\begin{lemma}
	For any specification $S$ it holds that:
	\[
		S' \in \mathit{Permut}(S) \implies \mathit{Safe}(S') \implies \mathit{Safe}(S) 
	\]
\end{lemma}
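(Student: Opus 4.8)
The plan is to reduce safety preservation to a purely semantic fact: permuting the equations of a specification leaves its denotation unchanged. Since $\mathit{Safe}(-)$ is defined solely in terms of $\den{-}$, once I show $\den{S'} = \den{S}$ the conclusion is immediate. So the real content of the proof lies in the auxiliary claim that $S' \in \mathit{Permut}(S) \implies \den{S'} = \den{S}$.

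To establish that claim, I would unfold the denotation of a list of equations. By the clauses $\den{\mathit{eq} \cdot S} = \den{\mathit{eq}} \cap \den{S}$ and $\den{\epsilon} = \mathit{Smap}(\Vin) \cdot \mathit{Smap}(\Vout)$, the denotation of any list $\mathit{eq}_1 \cdot \ldots \cdot \mathit{eq}_k$ equals the finite intersection $\den{\mathit{eq}_1} \cap \cdots \cap \den{\mathit{eq}_k}$ intersected with the universal set $\den{\epsilon}$, which acts as a neutral element for $\cap$. Because finite intersection is commutative and associative, this value does not depend on the order in which the equations are listed.

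To match this informal observation to the concrete definition of $\mathit{Permut}$, I would argue by induction. It suffices to check that swapping two adjacent equations preserves the denotation, namely $\den{\mathit{eq}_1 \cdot \mathit{eq}_2 \cdot S} = \den{\mathit{eq}_2 \cdot \mathit{eq}_1 \cdot S}$; both sides unfold to $\den{\mathit{eq}_1} \cap \den{\mathit{eq}_2} \cap \den{S}$ by commutativity of $\cap$. Every permutation of a list is obtained as a finite composition of such adjacent transpositions, so an induction on the number of transpositions yields $\den{S'} = \den{S}$ for an arbitrary $S' \in \mathit{Permut}(S)$.

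Finally I would conclude the safety transfer. Assuming $\mathit{Safe}(S')$, fix an arbitrary $\rhoIn \in \mathit{Smap}(\Vin)$; safety of $S'$ supplies some $\rhoOut$ with $\rhoIn \cdot \rhoOut \in \den{S'}$, and since $\den{S'} = \den{S}$ the same $\rhoOut$ witnesses $\rhoIn \cdot \rhoOut \in \den{S}$. As $\rhoIn$ was arbitrary, $\mathit{Safe}(S)$ follows. I do not expect a genuine obstacle here: the only care needed is to align the set-theoretic commutativity argument with the inductive structure of $\mathit{Permut}$, which is entirely routine. In fact the argument shows the stronger symmetric statement $\den{S'} = \den{S}$, so safety is preserved in both directions.
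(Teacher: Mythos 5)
Your proposal is correct and takes essentially the same route as the paper: the paper states this lemma without an explicit proof, justifying it only by the remark that the semantics of \CLola is independent of the order of equations, which is precisely the denotational invariance $\den{S'} = \den{S}$ you establish via associativity and commutativity of intersection (with $\den{\epsilon}$ as neutral element). Your final step reducing the safety transfer to this equality is exactly the intended argument, so your write-up simply fills in the details the paper leaves implicit.
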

  \section{Pacing Types With Self References}\label{sec:pacing_types_self}
This section weakens the type system in \Cref{sec:pacing_types} by allowing streams to access their own past values.
This enables specifications such as the running average computation in \Cref{fig:running_average}.
\begin{figure}[ht!]
\begin{tabular}{c}
\begin{lstlisting}
	input i: Int
	output count @i@ := count.prev(or: 0) + 1
	output sum @i@ := sum.prev(or: 0) + i
	output average @i@ := sum / count
\end{lstlisting}
\end{tabular}
\caption{A \CLola specification computing the running average of input $i$}
\label{fig:running_average}
\end{figure}
Intuitively, such specifications are safe as long as a stream depends (transitively) only on its past values.
At runtime, these values have already been computed and are available for the stream's next value computation.
However, the pacing type system in \Cref{sec:pacing_types} rejects such specifications because the \textsc{PrevOut} rule requires the accessed stream to be in the typing environment, while the \textsc{Eq} rule types the expression under an environment that explicitly excludes the stream itself.
The extended type system introduced in this section permits such specifications by explicitly tracking the stream to which an expression belongs.

\subsection{Typing Rules Allowing Self References}
As in \Cref{sec:pacing_types}, the extended typing rules operate on two levels: specifications (i.e., lists of equations) and expressions.
At the specification level, the typing rules in \Cref{fig:sr-rules:eqs} type expressions using a modified typing judgment ($\vdash^{\color{red} x}$), which records the output stream associated with the expression.
At the expression level, the rules in \Cref{fig:sr-rules:exprs} ensure that an output stream does not access itself directly, i.e., the accessed stream must differ from $x$.
Previously, this restriction was an implicit consequence of the $x \not \in \mathit{dom}(\Gamma)$ requirement in the \textsc{Eq} rule and the requirement that $(x: \tau_\mathit{can}) \in \Gamma$ for stream access expressions in \Cref{sec:pacing_types}.
An exception is the \textsc{Self} rule, which enables the desired extension.
It allows typing a previous access while explicitly enforcing that the accessed stream is the stream itself ($x = y$).
Yet, unlike the \textsc{Sr-PrevOut} rule, it does not require the accessed stream to be present in the typing environment $\Gamma$.

\begin{figure}[ht!]
\begin{mathpar}
  \inferrule[Sr-Empty]{ }{\Gamma \vdash \epsilon} \and \inferrule[Sr-Eq]{ x \notin \mathit{dom}(\Gamma)\\\Gamma \vdash^x e : \tau\\ \Gamma, x : \tau \vdash \varphi }{\Gamma \vdash (\strdef{x}{\tau}{e}) \cdot \varphi }\\
\end{mathpar}
\caption{The \CLola pacing type rules for stream equations allowing self-references}
\label{fig:sr-rules:eqs}
\end{figure}


\begin{figure}[ht!]
\begin{mathpar}
  \inferrule[Sr-Const]{v \in \mathbb{Z}}{\Gamma \vdash^x v : \tau}\and
  \inferrule[Sr-BinOp]{\Gamma \vdash^x e_1 : \tau_\mathit{must} \\ \Gamma \vdash^x e_2 : \tau_\mathit{must} }{\Gamma \vdash^x \stradd{e_1}{e_2} : \tau_\mathit{must}}\\
  \inferrule[Sr-DirectOut]{y \in \mathbb{V}_\texttt{out}\\ x \ne y \\(y : \tau_\mathit{can}) \in \Gamma\\ \tau_\mathit{must} \models \tau_\mathit{can}}{\Gamma \vdash^x y : \tau_\mathit{must}}\and
  \inferrule[Sr-DirectIn]{y \in \mathbb{V}_\texttt{in} \\ \tau_\mathit{must} \models y }{\Gamma \vdash^x y : \tau_\mathit{must}}\\
  \inferrule[Sr-HoldOut]{x \ne y\\ y \in \mathit{dom}(\Gamma)\\\Gamma \vdash^x e : \tau_\mathit{must} }{\Gamma \vdash^x \strhold{y}{e} : \tau_\mathit{must}}\and
    \inferrule[Sr-HoldIn]{y \in \Vin\\ \Gamma \vdash^x e : \tau_\mathit{must} }{\Gamma \vdash^x \strhold{y}{e} : \tau_\mathit{must}}\\
  \inferrule[Sr-PrevOut]{x \ne y\\(y : \tau_\mathit{can}) \in \Gamma \\ \Gamma \vdash^x e : \tau_\mathit{must} \\ \tau_\mathit{must} \models \tau_\mathit{can}\\ }{\Gamma \vdash^x \strlast{y}{e} : \tau_\mathit{must}}\\
  \inferrule[Sr-PrevIn]{y \in \Vin \\ \Gamma \vdash^x e : \tau_\mathit{must} \\ \tau_\mathit{must} \models y\\ }{\Gamma \vdash^x \strlast{y}{e} : \tau_\mathit{must}}\qquad
  \inferrule[Self]{x = y \\ \Gamma \vdash^x e : \tau_\mathit{must}}{\Gamma \vdash^x \strlast{y}{e} : \tau_\mathit{must}}
\end{mathpar}
\caption{The \CLola pacing type rules for stream expressions allowing self-references}
\label{fig:sr-rules:exprs}
\end{figure}

\subsection{Updated Interpretation}

Consider the single equation $\strdef{x}{\syn{$\top$}}{\strlast{x}{c}}$.
This equation has a solution (map $x$ to $w_x = c^\omega$). As expected, it is accepted by the new type system, as demonstrated by the following derivation: \begin{mathpar}
  \inferrule*[left=Sr-Eq]{
    \inferrule*[left=Sr-Self]{
      \inferrule*[left=Sr-Const]{ }{\emptyset \vdash c : \syn{$\top$}}
    }{\emptyset \vdash \strlast{x}{c} : \syn{$\top$}}
  }{\emptyset \vdash \strdef{x}{\top}{\strlast{x}{c}}}
\end{mathpar}

However, this equation is not semantically well-typed, and we cannot prove $\emptyset \vDash \strlast{x}{c} : \syn{$\top$}$.
Indeed, for any $\rhoIn$, the only $\rhoOut$ in $\Lctx{\emptyset}_\rhoIn$ is $\emptyset$
and $\pden{\strlast{x}{c}}^n_{\rhoIn \cdot \emptyset} = \bad$ for all $n$.
More generally, in our current interpretation of pacing types,
an equation of the form $\strdef{x}{\tau}{e}$ is considered to be safe in the environment $\Gamma$ under the condition that $e$ can be safely evaluated using only streams in $\Gamma$, thus forbidding any form of self-references.
Nonetheless, the updated type system still ensures the safety of specifications.
To prove this, we update the semantic interpretation of pacing types.

\headline{Semantics of Expressions Under Partial Maps}

\newcommand{\mpden}[1]{\widetilde{\den{#1}}}

To define an interpretation of pacing types justifying the correctness of the new rule \textsc{Sr-Self}, we extend the partial semantics of expressions with an additionally \emph{named memory cell}.
Conceptually, this memory cell represents the last defined value of a stream
$x$ (where $x \in \Vout$).
As for $\pden{-}$, this semantics is defined for partial stream maps.

\begin{definition}[Semantics of expressions with named memory cell]
  \begin{align*}
    \mpden{c}^{n,x=v}_\rho &\triangleq c\\
    \mpden{\stradd{e_1}{e_2}}^{n,x=v}_\rho &\triangleq \mpden{e_1}^{n,x=v}_\rho +_{\bad} \mpden{e_2}^{n,x=v}_\rho\\
    \mpden{y}^{n,x=v}_\rho &\triangleq \begin{cases}
    	\begin{tabular}{ll}
    		$\pden{y}^n_\rho$ &if $y \ne x$\\
    		$\bad$ &if $y = x$\\
    	\end{tabular}
    \end{cases}\\
    \mpden{\strlast{y}{e}}^{n,x=v}_\rho &\triangleq \begin{cases}
    	\begin{tabular}{ll}
    		$\prev{\rho(y)}{n}{\mpden{e}^{n,x=v}_\rho}$ &if $y \ne x \land y \in \mathit{dom}(\rho)$\\
    		$\bad$ &if $y \ne x \land y \not\in \mathit{dom}(\rho)$\\
    		$v$ &if $y = x \land n > 0 \land v \ne \bad$\\
    		$\mpden{e}^{n,x=v}_\rho$ &if $y = x \land (n = 0 \lor v = \bad)$\\
    	\end{tabular}
    \end{cases}\\
    \mpden{\strhold{x}{e}}^{n,x=v}_\rho &\triangleq \begin{cases}
    	\begin{tabular}{ll}
    		$\hold{\rho(y)}{n}{\mpden{e}^{n,x=v}_\rho}$ &if $y \ne x \land y \in \mathit{dom}(\rho)$\\
    		$\bad$ &if $y = x \lor (y \ne x \land y \not\in \mathit{dom}(\rho))$\\
    	\end{tabular}
    \end{cases}
  \end{align*}
\end{definition}

We note that for any expression $e$ and any stream $x$ and any partial map $\rho$ that contains sufficiently many streams for $e$ to be safely evaluated, then $\den{e}$ and $\mpden{e}$ coincide:
\newcommand{\mlast}[2]{\widetilde{\mathit{Last}}(#1,#2)}
\begin{lemma}
  \label[plemma]{lem:boring2}
  Let $x \in \Vout$, $Y \subseteq \Vout \setminus \{x\}$, and $\rho \in \mathit{Smap}(\Vin \cup Y)$ and $\rho' \in \mathit{Smap}(\Vout \setminus Y \setminus \{ x \})$. Then for any stream $w \in \mathit{Stream}$ and any $n \in \mathbb{N}$, we have \[
    w(n) \ne \bot \implies \mpden{e}^{n, x = \mlast{w}{n-1}}_\rho \ne \bad \implies \mpden{e}^{n, x = \mlast{w}{n-1}}_\rho = \den{e}_{\rho \cdot (x \mapsto w) \cdot \rho'}
  \]
  Where $\mlast{w}{n}$ is defined as follows:
  \[
    \mlast{w}{n} \triangleq \begin{cases}
    	\begin{tabular}{ll}
    		$\last{w}{n}$ &if $n \geq 0$\\
    		$\bad$ &if $n < 0$\\
    	\end{tabular}
    \end{cases}
  \]
\end{lemma}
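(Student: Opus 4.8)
The plan is to prove the statement by structural induction on the expression $e$, mirroring the proof of \Cref{lem:pden-den} but additionally reconciling the named memory cell for $x$ with the actual stream $w$ stored at $x$ in the full map. Throughout, I fix $w$ and $n$ with $w(n) \ne \bot$, abbreviate $v \triangleq \mlast{w}{n-1}$ and $\rho_\star \triangleq \rho \cdot (x \mapsto w) \cdot \rho'$, assume $\mpden{e}^{n,x=v}_\rho \ne \bad$, and aim to show $\mpden{e}^{n,x=v}_\rho = \den{e}^n_{\rho_\star}$. The observation driving every case is that $\rho$ and $\rho_\star$ agree on every variable $y \ne x$ lying in $\mathit{dom}(\rho)$ (since $\cdot$ resolves to the leftmost binding), while the value that $\mpden{-}$ records for a previous access to $x$ is exactly the value that $\mathit{Prev}$ extracts from $w$.

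The routine cases are constants, binary operations, and all accesses to a stream $y \ne x$. Constants are immediate, and for $\stradd{e_1}{e_2}$ the assumption of non-failure forces both operands to be $\ne \bad$ by the definition of $+_\bad$, so the induction hypothesis applies to each and the values recombine. For a direct access to $y \ne x$, non-failure of $\mpden{y}^{n,x=v}_\rho = \pden{y}^n_\rho$ forces $y \in \mathit{dom}(\rho)$, whence $\rho_\star(y) = \rho(y)$ and both sides equal $\sync{\rho(y)}{n}$. For $\strlast{y}{e}$ and $\strhold{y}{e}$ with $y \ne x$, the non-failing branch again forces $y \in \mathit{dom}(\rho)$ and hence $\rho_\star(y) = \rho(y)$, so both sides apply the same $\mathit{Prev}$ (respectively $\mathit{Hold}$) to the same stream; these functions consult their default argument only on the branch where the default is actually returned, and on that branch the default is non-$\bad$, so the induction hypothesis gives $\mpden{e}^{n,x=v}_\rho = \den{e}^n_{\rho_\star}$ and the two results coincide. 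Finally, the cases $y \notin \mathit{dom}(\rho)$ (for $y \ne x$), a direct access to $x$, and a hold access to $x$ all evaluate $\mpden{-}$ to $\bad$, so the antecedent $\mpden{e}^{n,x=v}_\rho \ne \bad$ is false and the implication holds vacuously.

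The crux — and the step I expect to be the main obstacle — is the self-referencing previous access $\strlast{x}{e}$ handled by the \textsc{Self} rule, where I must show that the memory cell $v = \mlast{w}{n-1}$ faithfully reproduces $\prev{w}{n}{\den{e}^n_{\rho_\star}}$, using that $\rho_\star(x) = w$. I split exactly as the definition of $\mpden{\strlast{x}{-}}$ does. If $n > 0$ and $v \ne \bad$, then $v = \last{w}{n-1}$ with $\last{w}{n-1} \ne \bad$; since $w(n) \ne \bot$, the first clause of $\mathit{Prev}$ fires and $\prev{w}{n}{\den{e}^n_{\rho_\star}} = \last{w}{n-1} = v$, matching $\mpden{\strlast{x}{e}}^{n,x=v}_\rho = v$ with no appeal to the induction hypothesis. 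Otherwise $n = 0$ or $v = \bad$; in both situations $\mlast{w}{n-1} = \bad$ (at index $-1$ by definition of $\mlast{-}{-}$, or because $\last{w}{n-1} = \bad$ when $n>0$), so $w(n) \ne \bot$ places $\mathit{Prev}$ on its default branch, giving $\prev{w}{n}{\den{e}^n_{\rho_\star}} = \den{e}^n_{\rho_\star}$, while $\mpden{\strlast{x}{e}}^{n,x=v}_\rho = \mpden{e}^{n,x=v}_\rho$, and the induction hypothesis closes the gap. The delicate point is the bookkeeping between the signed index $n-1$ used by $\mlast{-}{-}$ and the memory-cell conditions $n = 0 \lor v = \bad$, together with the essential use of $w(n) \ne \bot$: this hypothesis is precisely what prevents the full-semantics $\mathit{Prev}$ from failing, keeping it in lockstep with the memory-cell semantics.
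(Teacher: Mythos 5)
Your proof is correct and takes essentially the same approach as the paper's: structural induction on $e$, with the past-access case split into $y \ne x$ (where $\rho$ and the full map agree on $y$) and the self-access $y = x$ (where $w(n) \ne \bot$ keeps $\mathit{Prev}$ in lockstep with the memory cell $\widetilde{\mathit{Last}}(w, n-1)$). If anything, your treatment of the default arguments of $\mathit{Prev}$ and $\mathit{Hold}$ in the $y \ne x$ cases---splitting on whether the default is actually returned before invoking the induction hypothesis---is slightly more careful than the paper's, which rewrites the default by induction without noting that its non-failure is only guaranteed in the branch where the default is actually used.
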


\begin{proof}
  By induction on $e$. The difficult case is the case of an past access $\strlast{y}{e}$.

  \bigskip\noindent\textbf{When $x \ne y$}: We have \begin{align*}
    \mpden{\strlast{y}{e}}^{n,x=\mlast{w}{n-1}}_\rho
    &= \begin{cases}
    \begin{tabular}{ll}
      $\prev{\rho(y)}{n}{\mpden{e}^{n,x=\mlast{w}{n-1}}_\rho}$ &if $y \in \mathit{dom}(\rho)$\\
      $\bad$ &if $y \not\in \mathit{dom}(\rho)$
    \end{tabular}
    \end{cases}\\
    \textrm{\textit{\color{gray}By induction}}\qquad
    &= \begin{cases}
      \begin{tabular}{ll}
        $\prev{\rho(y)}{n}{\den{e}^{n}_{\rho \cdot (x \mapsto w) \cdot \rho'}}$ &if $y \in \mathit{dom}(\rho)$\\
        $\bad$ &if $y \not\in \mathit{dom}(\rho)$
      \end{tabular}
    \end{cases}\\
    \textrm{\textit{\color{gray}By def. of $\pden{-}$}}\qquad
    &= \pden{e}^n_{\rho \cdot (x \mapsto w) \cdot \rho'}\\
    \textrm{\textit{\color{gray}By \Cref{lem:pden-den}}}\qquad
    &= \den{e}^n_{\rho \cdot (x \mapsto w) \cdot \rho'}
  \end{align*}

  \noindent\textbf{When $x = y$}:

  \begin{align*}
    \mpden{\strlast{x}{e}}^{n,x=\mlast{w}{n-1}}_\rho
    &=\begin{cases}\begin{tabular}{ll}
      $\mlast{w}{n-1}$ &if $n > 0 \land \mlast{w}{n-1} \ne \bad$\\
      $\mpden{e}^{n,x=\mlast{w}{n-1}}_\rho$ &if $n = 0 \lor \mlast{w}{n-1} = \bad$
    \end{tabular}
    \end{cases}\\
    \textrm{\textit{\color{gray}By def. of $\mlast{-}{-}$ for $n \ge 0$}}\qquad
    &= \begin{cases}\begin{tabular}{ll}
      $\last{w}{n-1}$ &if $n > 0 \land \last{w}{n-1} \ne \bad$\\
      $\mpden{e}^{n,x=\mlast{w}{n-1}}_\rho$ &if $n = 0 \lor \last{w}{n-1} = \bad$
    \end{tabular}
    \end{cases}\\
    \textrm{\textit{\color{gray}By induction}}\qquad
    &= \begin{cases}\begin{tabular}{ll}
      $\last{w}{n-1}$ &if $n > 0 \land \last{w}{n-1} \ne \bad$\\
      $\den{e}^n_{\rho \cdot (x \mapsto \rho) \cdot \rho'}$ &if $n = 0 \lor \last{w}{n-1} = \bad$
    \end{tabular}
    \end{cases}\\
    \textrm{\textit{\color{gray}By definition of $\mathit{Prev}$ if $w(n) \ne \bot$}}\qquad
    &= \mathit{Prev}(w, n, \den{e}^n_{\rho \cdot (x \mapsto \rho) \cdot \rho'})\\
    \textrm{\textit{\color{gray}By definition of $\den{-}^n$}}\qquad
    &= \den{\strlast{x}{e}}^n_{\rho \cdot (x \mapsto w) \cdot \rho'}
  \end{align*}
\end{proof}

\headline{Updated logical relation}

Building on the new semantics of expressions, we give an updated interpretation of pacing types.
The relations $\Lsafe{-}, \Lctx{-}, \Lstr{-}$ are left unchanged. However, we redefine the relation $\Lequ{-}$ as follows \begin{align*}
  \Lequ{\Gamma \mid x : \tau}_\rhoIn \triangleq \{ \ \forall \rhoOut \in \Lctx{\Gamma}_\rhoIn.
  \forall n \in \den{\tau}_\rhoIn. \forall v \in \mathbb{Z} \cup \{ \bad \}. \ \mpden{e}^{n, x = v}_{\rhoIn \cdot \rhoOut} \neq \bad \ 
  \}
\end{align*}

Note that this time, $\Lequ{-}$ takes a variable $x$ as an argument in addition to the context $\Gamma$ and the annotation $\tau$.
We use the notation $\Gamma \models^x e : \tau \triangleq \forall \rhoIn. e \in \Lequ{\Gamma \mid x : \tau}_\rhoIn$ for semantic typing of expressions using the updated interpretation.

\subsection{Soundness Proof V2}

\begin{lemma}[Soundness of typing rules for expressions]
    The following implications hold:
    \begin{enumerate}
    \item $v \in \mathbb{Z} \implies \Gamma \vDash^x v : \tau$
    \item $x \in \mathbb{V}_\texttt{in} \implies \Gamma \vDash^x x : \tau$
    \item $x \in \mathbb{V}_\texttt{out} \implies (x : \tau_\mathit{can}) \in \Gamma \implies \tau_\mathit{must} \models \tau_\mathit{can} \implies \Gamma \vDash^x x : \tau_\mathit{must}$
    \item $\Gamma \vDash^x e_1 : \tau \implies \Gamma \vDash^x e_2 : \tau \implies \Gamma \vDash^x \stradd{e_1}{e_2} : \tau $
    \item $y \in \Vin \implies \vDash^x e : \tau \implies \tau \models y \implies \Gamma \vDash^x \strlast{y}{e} : \tau$
    \item $\Gamma \vDash^x e : \tau_\mathit{must} \implies \Gamma \vDash^x \strlast{x}{e}: \tau_\mathit{must} $
    \item $x \neq y \implies \Gamma \vDash^x e : \tau_\mathit{must} \implies \tau_\mathit{must} \models \tau_\mathit{can} \implies \Gamma \vDash^x \strlast{y}{e}: \tau_\mathit{must}$
    \item $y \in \Vin \implies \Gamma \vDash^x e : \tau \implies \Gamma \models \strhold{y}{e} : \tau$
    \item $x \ne y \implies y \in \mathit{dom}(\Gamma) \implies \Gamma \vDash^x (e : \tau) \implies \Gamma \models \strhold{y}{e} : \tau$
  \end{enumerate}
\end{lemma}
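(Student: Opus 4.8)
The plan is to prove each of the nine implications independently, mirroring the structure of the proof of \Cref{lem:expression-compat}, but now working with the memory-cell semantics $\mpden{-}$ and the updated relation $\Lequ{\Gamma \mid x : \tau}$. For each item I would unfold the semantic judgment $\Gamma \vDash^x e : \tau$: fix an arbitrary $\rhoIn$, an arbitrary $\rhoOut \in \Lctx{\Gamma}_\rhoIn$, a time point $n \in \den{\tau}_\rhoIn$, and a memory value $v \in \mathbb{Z} \cup \{ \bad \}$, and then discharge the single goal $\mpden{e}^{n, x = v}_{\rhoIn \cdot \rhoOut} \ne \bad$. The constant case is immediate since $\mpden{c}^{n,x=v}_\rho = c$, and the binary-operator case follows from the two hypotheses exactly as in item (4) of \Cref{lem:expression-compat}, because $+_\bad$ preserves definedness. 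The crucial observation that keeps the recursion closing is that the hypothesis $\Gamma \vDash^x e : \tau$ quantifies over \emph{all} $v \in \mathbb{Z} \cup \{\bad\}$, so whenever a rule recurses on a sub-expression I may instantiate its hypothesis at the very same $v$ that I am currently considering.

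For the access cases that target a stream $y$ distinct from the tracked stream $x$, the memory cell is inert: by definition $\mpden{y}^{n,x=v}_\rho = \pden{y}^n_\rho$ when $y \ne x$, and the $\strlast{y}{-}$ and $\strhold{y}{-}$ clauses differ from their $\pden{-}$ counterparts only in that the default sub-expression is evaluated under $\mpden{-}$ rather than $\pden{-}$. Hence these cases reduce, almost verbatim, to the corresponding arguments of \Cref{lem:expression-compat}. Concretely, for \textsc{Sr-PrevOut} I would use $\rhoOut(y) \in \Lstr{\tau_\mathit{can}}_\rhoIn$ together with $\tau_\mathit{must} \models \tau_\mathit{can}$ to obtain $\rhoOut(y)(n) \ne \bot$, instantiate the hypothesis on $e$ at the current $v$ to get $\mpden{e}^{n,x=v}_{\rhoIn \cdot \rhoOut} \ne \bad$, and conclude with the fact that $\prev{w}{n}{v'} \ne \bad$ holds whenever $w(n) \ne \bot$ and $v' \ne \bad$. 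The hold cases are even easier, relying only on $\hold{w}{n}{v'} \ne \bad$ whenever $v' \ne \bad$, and on $y \in \mathit{dom}(\rhoOut) = \mathit{dom}(\Gamma)$; the direct-access cases to inputs and other outputs are handled as in items (2)--(3) of \Cref{lem:expression-compat}.

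The one genuinely new case, and the one I expect to be the heart of the argument, is item 6, corresponding to the \textsc{Self} rule. Here the goal is $\mpden{\strlast{x}{e}}^{n,x=v}_{\rhoIn \cdot \rhoOut} \ne \bad$ under the hypothesis $\Gamma \vDash^x e : \tau_\mathit{must}$, and I would proceed by case analysis following the $y = x$ branches of the definition of $\mpden{\strlast{-}{-}}$. If $n > 0$ and $v \ne \bad$, the expression evaluates to $v$, which is $\ne \bad$ by assumption. Otherwise ($n = 0$ or $v = \bad$) it evaluates to $\mpden{e}^{n,x=v}_{\rhoIn \cdot \rhoOut}$, which is $\ne \bad$ directly from the hypothesis on $e$. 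This is exactly where the design of the memory-cell semantics pays off: \textsc{Self} is justified precisely because accessing $x$'s own past either returns the stored value $v$ or falls back to the default expression $e$, both of which are guaranteed to be defined. The main subtlety to get right is therefore not any single computation but the bookkeeping of the memory value $v$ across the universal quantifier in the new definition of $\Lequ{-}$ --- ensuring that each hypothesis is applied at the correct $v$ --- after which every case is a short unfolding.
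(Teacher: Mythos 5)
Your proposal is correct and follows essentially the same route as the paper's proof: unfold the semantic judgment by fixing $\rhoIn$, $\rhoOut \in \Lctx{\Gamma}_\rhoIn$, $n$, and the memory value $v$, reduce the cases with $y \ne x$ to the earlier compatibility lemma since $\mpden{-}$ then coincides with $\pden{-}$, and handle the \textsc{Self} case by the case split on $n > 0 \land v \ne \bad$ versus the fallback to the default expression. Your emphasis on instantiating each hypothesis at the same $v$, and on the definedness facts for $\mathit{Prev}$ and $\mathit{Hold}$, matches the paper's argument for cases (3), (4), (6), and (7) in detail.
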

\begin{proof}
  As for pacing types without self-references, we only focus on a selection of challenging cases.

  \bigskip\noindent\textbf{(3) Direct access to an output $x \ne y$:}
  Let $y$ an output variable such that $x \ne y$ and $(y : \tau_\mathit{can}) \in \Gamma$.
  Further, suppose that $\tau_\mathit{must} \models \tau_\mathit{can}$.
  We have to show $\Gamma \models^x y : \tau_\mathit{must}$.
  Let $\rhoIn \in \mathit{Smap}(\Vin)$, and $\rhoOut \in \Lctx{\Gamma}_\rhoIn$, it suffices to show \[
    \forall n \in \den{\tau_\mathit{must}}. \forall v \in \mathbb{Z} \cup \{ \bad \}. \ \mpden{y}^{n, x = v}_{\rhoIn \cdot \rhoOut} = \bad
  \]
  Let $n \in \den{\tau_\mathit{must}}$ and $v \in \mathbb{Z} \cup \{ \bad \}$. Since $y \ne x$, by definition we have $\mpden{y}^{n, x = v}_{\rhoIn \cdot \rhoOut}
    = \pden{y}^n_{\rhoIn \cdot \rhoOut}$.
  Further, from $\rhoOut \in \Lctx{\Gamma}_\rhoIn$, $(y : \tau_\mathit{can}) \in \Gamma$, $n \in \den{\tau_\mathit{must}}$ and $\tau_\mathit{must} \models \tau_\mathit{can}$ we can easily deduce $\rhoOut(y)(n) \ne \bot$ and therefore $\pden{y}^n_{\rhoIn \cdot \rhoOut} = \rhoOut(y)(n) \in \mathbb{Z}$.
  It follows that $\mpden{y}^{n, x = v}_{\rhoIn \cdot \rhoOut} = \rhoOut(y)(n) \ne \bad$

  \bigskip\noindent\textbf{(4) Addition:}
  Suppose $\Gamma \vDash^x e_1 : \tau$ and $\Gamma \vDash^x e_2 : \tau$. We have to show $\Gamma \vDash^x \stradd{e_1}{e_2} : \tau$.
  Let $\rhoIn \in \mathit{Smap}(\Vin)$ and $\rhoOut \in \Lctx{\Gamma}_\rhoIn$. We have to show \[
    \forall n \in \den{\tau_\mathit{must}}. \forall v \in \mathbb{Z} \cup \{ \bad \}. \ \mpden{\stradd{e_1}{e_2}}^{n, x = v}_{\rhoIn \cdot \rhoOut} \ne \bad
  \]
  Let $n \in \den{\tau_\mathit{must}}$ and $v \in \mathbb{Z} \cup \{ \bad \}$. Since $\Gamma \vDash^x e_1$ and $\Gamma \vDash^x e_2$, we know $\mpden{e_2}^{n, x = v}_{\rhoIn \cdot \rhoOut} \ne \bad$ and $\mpden{e_1}^{n, x = v}_{\rhoIn \cdot \rhoOut} \ne \bad$. It immediately follows that $\mpden{\stradd{e_1}{e_2}}^{n, x = v}_{\rhoIn \cdot \rhoOut} = \mpden{e_1}^{n, x = v}_{\rhoIn \cdot \rhoOut} +_\bad \mpden{e_2}^{n, x = v}_{\rhoIn \cdot \rhoOut} \ne \bad$.

  \bigskip\noindent\textbf{(6) Past access to $x$:} Suppose $\Gamma \vDash^x e : \tau$. We have to show $\Gamma \vDash^x \strlast{x}{e} : \tau$.
  Let $\rhoIn \in \mathit{Smap}(\Vin)$ and $\rhoOut \in \Lctx{\Gamma}_\rhoIn$. It suffices to show \[
    \forall n \in \den{\tau}. \forall v \in \mathbb{Z} \cup \{ \bad \}. \ \mpden{\strlast{y}{e}}^{n, x = v}_{\rhoIn \cdot \rhoOut} \ne \bad
  \]
  Let $n \in \den{\tau}$ and $v \in \mathbb{Z} \cup \{ \bad \}$. By definition, $\mpden{\strlast{x}{e}}^{n, x=v}_{\rhoIn \cdot \rhoOut}$ is equal to $v$ if $v \ne \bad$. In that case, we are done.
  Otherwise, $\mpden{\strlast{x}{e}}^{n, x=v}_{\rhoIn \cdot \rhoOut} = \mpden{e}^{n, x=v}_{\rhoIn \cdot \rhoOut}$.
  Since $\Gamma \vDash^x e : \tau$, we know $\mpden{e}^{n, x=v}_{\rhoIn \cdot \rhoOut} \ne \bad$.

  \bigskip\noindent\textbf{(7) Past access to an output $x \ne y$:}
  Suppose $y : \tau_\mathit{can} \in \Gamma$, $\Gamma \vDash^x e : \tau_\mathit{must}$ and $\tau_\mathit{must} \models \tau_\mathit{can}$. We have to show $\Gamma \models \strlast{y}{e} : \tau_\mathit{must}$.
  Let $\rhoIn \in \mathit{Smap}(\Vin)$ and $\rhoOut \in \Lctx{\Gamma}_\rhoIn$. It suffices to show \[
    \forall n \in \den{\tau_\mathit{must}}. \forall v \in \mathbb{Z} \cup \{ \bad \}. \ \mpden{\strlast{y}{e}}^{n, x = v}_{\rhoIn \cdot \rhoOut} \ne \bad
  \]
  Let $n \in \den{\tau_\mathit{must}}$ and $v \in \mathbb{Z} \cup \{ \bad \}$. Since $y \ne x$ and $y \in \mathit{dom}(\rhoOut)$, by definition $\mpden{\strlast{y}{e}}^{n, x=v}_{\rhoIn \cdot \rhoOut} = \mathit{Prev}(\rhoOut(y), n, \mpden{e}^{n, x=v}_{\rhoIn \cdot \rhoOut})$.
  As observed in \Cref{lem:expression-compat}, $\mathit{Prev}(\rhoOut(y), n, \mpden{e}^{n, x=v}_{\rhoIn \cdot \rhoOut})$ can only be equal to $\bad$ if (i) $\mpden{e}^{n, x = v}_{\rhoIn \cdot \rhoOut} = \bad$ or if (ii) $\rhoOut(y)(n) = \bot$.
  Since we assumed $\Gamma \vDash^x e : \tau_\mathit{must}$, clearly $\mpden{e}^{n, x = v}_{\rhoIn \cdot \rhoOut} \ne \bad$, contradicting (i).
  Further, since $n \in \den{\tau_\mathit{must}}_\rhoIn$ and $\tau_\mathit{must} \models \tau_\mathit{can}$, we can deduce that $n \in \den{\tau_\mathit{can}}_\rhoIn$. Then, since $(y : \tau_\mathit{can}) \in \Gamma$ and $\rhoOut \in \Lctx{\Gamma}_\rhoIn$, it follows that $\rhoOut(y)(n) \ne \bot$, contradicting (ii).
\end{proof}

\begin{lemma}[Soundness of typing rules for equations]\ \begin{enumerate}
  \item $\Gamma \vDash \epsilon$
  \item $x \notin \mathit{dom}(\Gamma) \implies \Gamma \vDash^x e : \tau \implies \Gamma, x : \tau \vDash S \implies \Gamma \models (\strdef{x}{\tau}{e}) \cdot S$
\end{enumerate}
\end{lemma}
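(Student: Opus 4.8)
The proof of (1) is identical to the empty-specification case of \Cref{lem:equation-compat}: given $\rhoIn$ and $\rho^1_\mathit{out} \in \Lctx{\Gamma}_\rhoIn$, the everywhere-undefined map witnesses membership in $\den{\epsilon}$. The interesting case is (2), and my plan is to reuse the skeleton of item (2) of \Cref{lem:equation-compat} while accounting for the fact that $e$ may now read $x$'s own past through the \textsc{Self} rule. Concretely, I fix $\rhoIn$ and $\rho^1_\mathit{out} \in \Lctx{\Gamma}_\rhoIn$, build a witness stream $w_e$ for $x$, extend $\rho^1_\mathit{out} \cdot (x \mapsto w_e)$ to a solution of $S$ using $\Gamma, x : \tau \vDash S$, and finally check that the resulting map satisfies the equation $\strdef{x}{\tau}{e}$.

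The key difference from the self-reference-free case is that $w_e$ can no longer be read off directly from $\pden{e}$, since $e$ may inspect the past values of $x$, which are exactly the values of $w_e$ we are defining. Instead, I define $w_e$ by strong recursion on the time point $n$, feeding $x$'s accumulated history back through the named memory cell:
\[
  w_e(n) \triangleq \begin{cases}\begin{tabular}{ll}
    $\mpden{e}^{n, x = \mlast{w_e}{n-1}}_{\rhoIn \cdot \rho^1_\mathit{out}}$ & if $n \in \den{\tau}_\rhoIn$\\
    $\bot$ & otherwise
  \end{tabular}\end{cases}
\]
This recursion is well-founded because the cell value $\mlast{w_e}{n-1}$ inspects only $w_e(0), \dots, w_e(n-1)$ (and equals $\bad$ when $n = 0$), so $w_e(n)$ depends solely on strictly earlier values. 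This is precisely where the memory-cell semantics $\mpden{-}$ earns its keep, replacing the forbidden direct lookup of $x$ in the map by a lookup into the reconstructed history.

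To show $w_e \in \Lstr{\tau}_\rhoIn$, I must verify $w_e(n) \ne \bot$ for every $n \in \den{\tau}_\rhoIn$. Unfolding the hypothesis $\Gamma \vDash^x e : \tau$ gives, for the fixed $\rhoIn$, $\rho^1_\mathit{out} \in \Lctx{\Gamma}_\rhoIn$, and every $n \in \den{\tau}_\rhoIn$, that $\mpden{e}^{n, x = v}_{\rhoIn \cdot \rho^1_\mathit{out}} \ne \bad$ for all $v \in \mathbb{Z} \cup \{\bad\}$. The crucial observation is that this quantifies over \emph{all} cell values $v$, so instantiating at $v = \mlast{w_e}{n-1}$ yields $w_e(n) \ne \bad$, hence $w_e(n) \in \mathbb{Z}$ and $w_e(n) \ne \bot$. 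Consequently $\rho^1_\mathit{out} \cdot (x \mapsto w_e) \in \Lctx{\Gamma, x : \tau}_\rhoIn$, and $\Gamma, x : \tau \vDash S$ supplies some $\rho^S_\mathit{out} \in \mathit{Smap}(\Vout \setminus \mathit{dom}(\Gamma, x : \tau))$ with $\rhoIn \cdot \rho^1_\mathit{out} \cdot ((x \mapsto w_e) \cdot \rho^S_\mathit{out}) \in \den{S}$; I then set $\rho^2_\mathit{out} \triangleq (x \mapsto w_e) \cdot \rho^S_\mathit{out}$.

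It remains to verify $\rhoIn \cdot \rho^1_\mathit{out} \cdot \rho^2_\mathit{out} \in \den{\strdef{x}{\tau}{e}}$, i.e. $(\rho^1_\mathit{out} \cdot \rho^2_\mathit{out})(x)(n) = \den{e}^n_{\rhoIn \cdot \rho^1_\mathit{out} \cdot \rho^2_\mathit{out}}$ for all $n \in \den{\tau}_\rhoIn$. Since $x \notin \mathit{dom}(\rho^1_\mathit{out})$, the left-hand side is $w_e(n) = \mpden{e}^{n, x = \mlast{w_e}{n-1}}_{\rhoIn \cdot \rho^1_\mathit{out}}$. The bridge to the total semantics is \Cref{lem:boring2}, applied with $Y = \mathit{dom}(\Gamma)$, $\rho = \rhoIn \cdot \rho^1_\mathit{out}$, $w = w_e$, and $\rho' = \rho^S_\mathit{out}$: its two hypotheses $w_e(n) \ne \bot$ and $\mpden{e}^{n, x = \mlast{w_e}{n-1}}_{\rhoIn \cdot \rho^1_\mathit{out}} \ne \bad$ were just established, so it delivers $w_e(n) = \den{e}^n_{\rhoIn \cdot \rho^1_\mathit{out} \cdot \rho^2_\mathit{out}}$ after the rearrangement $(\rhoIn \cdot \rho^1_\mathit{out}) \cdot (x \mapsto w_e) \cdot \rho^S_\mathit{out} = \rhoIn \cdot \rho^1_\mathit{out} \cdot \rho^2_\mathit{out}$. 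I expect the main obstacle to be administrative rather than conceptual: justifying that the recursive definition of $w_e$ is legitimate (well-foundedness), and checking that the domain side-conditions of \Cref{lem:boring2} and the disjointness needed for that rearrangement all line up.
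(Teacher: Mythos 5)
Your proposal is correct and follows essentially the same route as the paper's proof: the same recursively defined witness stream $w_e$ (the paper formalizes the recursion via the padded prefix $w_e(0)\cdots w_e(n-1)\cdot 0^\omega$ and then shows it equals your direct $\mlast{w_e}{n-1}$ formulation, a purely cosmetic difference), the same exploitation of the universal quantification over the cell value $v$ in $\Gamma \vDash^x e : \tau$, the same extension via $\Gamma, x:\tau \vDash S$, and the same final bridge through \Cref{lem:boring2}. If anything, your explicit instantiation of the side conditions of \Cref{lem:boring2} is more detailed than the paper's closing remark.
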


\begin{proof}
  The proof of the first item is identical to the corresponding proof in \Cref{lem:equation-compat}.
  We focus on the second item. Suppose $x \notin \mathit{dom}(\Gamma)$, $\Gamma \vDash^x e : \tau$, and $\Gamma, x : \tau \vDash S$. We have to show $\Gamma \vDash \strdef{x}{\tau}{e} \cdot S$.
  Let $\rhoIn \in \mathit{Smap}(\Vin)$, and $\rho^1_\mathit{out} \in \Lctx{\Gamma}_\rhoIn$. We have to prove the existence of $\rho^2_\mathit{out} \in \mathit{dom}(\Vout \setminus \mathit{dom}(\Gamma))$ such that $
    \rhoIn \cdot \rho^1_\mathit{out} \cdot \rho^2_\mathit{out} \in \den{\strdef{x}{\tau}{e}}\cap \den{S}
  $.
  Since $\Gamma \vDash^x e : \tau$, by definition we know \[
    \forall n \in \den{\tau}_\rhoIn. \forall v_x \in \mathbb{Z} \cup \{ \bad \}. \mpden{e}^{n, x=v} \ne \bad
  \]

  We construct a stream for $e$ \emph{recursively} as follows: \begin{align*}
    w_e(n) \triangleq \begin{cases}\begin{tabular}{ll}
      $\mpden{e}^{n, x=\mlast{w_e(0) \cdot \ldots \cdot w_e(n - 1) \cdot 0^\omega}{n - 1}}$ & if $n \in \den{\tau}_\rhoIn$\\
      $\bot$ & if $n \notin \den{\tau}_\rhoIn$
    \end{tabular}\end{cases}
  \end{align*}

  Note that $w_e(n)$ is well defined because $w_e(n)$ only depends on $w_e$ at time points $n' < n$. Further, by the previous assumption, $w_e(n) \ne \bad$ for all $n \in \den{\tau}_\rhoIn$ and
  therefore, $w_e \in \Lstr{\tau}_\rhoIn$.
  Further, since $\mathit{Last}(w, n)$ only depends on the positions of $w$ up to $n$ for any stream $w$,
  it is not difficult to see that $\mlast{w_e(0)\cdot \ldots \cdot w_e(n - 1) \cdot 0^w}{n - 1} = \mlast{w_e}{n - 1}$.
  Therefore, for any $n \in \den{\tau}_\rhoIn$, we have $w_e(n) = \mpden{e}^{n, x=\mlast{w_e}{n - 1}}$.
  Now by applying $\Gamma \vDash S$ to the stream map $\rho^1_\mathit{out} \cdot (x \mapsto w_e)$, we obtain a map $\rho^S_\mathit{out}$ compatible with the equations in $S$.
  By the same reasoning as in \Cref{lem:equation-compat}, and by using \Cref{lem:boring2},
  we easily conclude that $(x \mapsto w_e) \cdot \rho^2_\mathit{out}$ is also compatible with $\strdef{x}{\tau}{e}$.
\end{proof}
   \section{Related Work and Discussion}\label{sec:related}

\paragraph*{Runtime Verification and Stream-based Monitoring}
Runtime verification is a mature idea which was first explored in the late 90s~\cite{DBLP:conf/ecrts/KimVBKLS99}.
Traditionally, specifications were expressed in temporal logics such as LTL~\cite{bauer2011runtime} or MTL~\cite{DBLP:conf/cav/BasinKZ17}.
The systems under observation range from Markov Decision Processes~\cite{DBLP:conf/cav/JungesTS20,DBLP:conf/cav/HenzingerKKM23} to cyber-physical systems~\cite{volostream} and software~\cite{DBLP:conf/icse/JinMLR12}.
Expressive specification techniques such as stream-based specification languages have been explored to capture the precise behavior of such systems.
Stream-based monitoring approaches such as Lola~\cite{d2005lola}, its successor \textsc{RTLola}~\cite{DBLP:conf/fm/BaumeisterFKS24}, Tessla~\cite{DBLP:conf/rv/KallwiesLSSTW22} or Striver~\cite{DBLP:conf/fm/GorostiagaS21} offer a fully-featured programming language.

\paragraph*{Synchronous Programming and Clocks}
Another programming paradigm that resembles stream-based monitoring in the style of RTLola is synchronous programming as implemented by programming languages such as LUSTRE~\cite{DBLP:journals/pieee/HalbwachsCRP91}, ESTEREL~\cite{DBLP:conf/concur/BerryC84} or Zélus~\cite{DBLP:conf/emsoft/BenvenisteBCP11}.
While stream-based monitoring languages have evolved independently of synchronous languages, there are overlapping problems and solutions in both fields of research.
For example, the fundamental problem of ensuring that a stream value exists when needed for further computations is present in both paradigms.
Similar to the pacing-types presented in this paper, synchronous languages employ various notions of \emph{clocks}~\cite{DBLP:conf/fpca/GautierG87,DBLP:conf/emsoft/ColacoP03,DBLP:conf/mpc/MandelPP10} and corresponding static analysis to solve timing issues.
Likewise,~\cite{DBLP:conf/hybrid/BenvenisteBCPP14} presents a type system to identify \emph{non-causal} equations in synchronous programs.
The property of \emph{causality} is related to the existence and uniqueness of models in stream-based monitoring.
Aside from mitigating timing issues, type systems enforcing other kinds of properties have been developed for synchronous programming languages. 
For example, an advanced refinement type system for the synchronous language MARVeLus has been explored in~\cite{DBLP:journals/pacmpl/ChenMABJSWZJ24}. 
This type system reasons about the temporal behavior of stream values based on LTL~\cite{DBLP:conf/focs/Pnueli77} specifications.


\paragraph*{Towards a Type-guided Interpreter for \CLola}

It is interesting to observe that, although we never introduced an operational semantics for \CLola
in this paper,
the soundness proof of the pacing type system provides a sketch of an \emph{interpreter}, which would safely compute a model for any well-typed list of equations.
Indeed, the proof constructs a model starting from the empty map,
progressively extending it by adding new streams satisfying each equation.
Since the type system enforces that each stream equation only depends on the previous equations
(or contains self-references to its own past), this \emph{strategy} never fails and gives
a systematic method to compute models of specifications.
This observation is reminiscent of an observation made by Kokke, Siek and Wadler in \cite{KOKKE2020102440}.
They observe that the computational content of a constructive proof of type-safety with respect to an operational semantics
(typically, a proof by \emph{progress and preservation}) 
constitutes a correct-by-construction interpreter.
This observation has been successfully applied, for example, to extract a correct interpreter for WASM \cite{10.1145/3704858}.

Even though the \CLola semantics presented in this paper is not formulated in an operational style, we still provided a notion of type-safety: any well-typed list of equations is guaranteed to have at least one model.
As an intermediate step, we provided an alternative semantics for stream expressions ($\mpden{-}$),
and constructed a semantic interpretation of pacing types against this semantics.
Conceptually, this alternative semantics is reminiscent of a more operational, state-based variant
of the original semantics of \CLola.
Under this interpretation, the soundness of the type system with respect to the semantic interpretation of pacing types can be viewed as a proof of correspondence between the base semantics and a more operational one.
Further, one could follow the approach of \cite{KOKKE2020102440} and \cite{10.1145/3704858} and try to extract an executable interpreter from the type-safety proof.
Another idea would be to exploit the alternative semantics of expressions to \emph{calculate} \cite{bahr15jfp} a correct compiler that translates well-typed \CLola specifications into executable imperative code computing a model.
We leave these ideas as future work.

\section{Conclusion}\label{sec:conclusion}
In this paper, we formalized pacing annotations, a simple yet expressive mechanism to specify
data synchronization policies in stream-based monitoring.
We formalized the essence of pacing annotations by presenting \CLola, a core language to specify monitors for asynchronous streams.
We equipped \CLola with a formal semantics that interprets annotated stream equations as relations between input and output streams.
We showed that, while a useful feature, pacing annotations can introduce inconsistencies in stream-based monitors.
Semantically, these inconsistencies can be understood as the non-existence of a model for a given set of annotated equations.
We addressed this issue by viewing pacing annotations as \emph{types}, and equipping \CLola with a type system that rejects specifications with contradictory pacing annotations.
We proved the soundness of this type system against the formal semantics of \CLola.
Importantly, since our soundness proof is developed with respect to the \emph{relational} semantics of \CLola, the soundness of our type system is independent of any concrete implementation of \CLola: the presented pacing types enforce the existence of a model; how to compute such a model is irrelevant.
Nonetheless, the constructive nature of our proof reveals an evaluation strategy for well-typed specifications.
  
  \bibliographystyle{ACM-Reference-Format}
  \bibliography{references.bib}

\end{document}